\documentclass[twocolumn, pra,footinbib, superscriptaddress  ]{revtex4-2}

\usepackage{times}
\usepackage{graphicx}
\usepackage{amsfonts}
\usepackage{amssymb}
\usepackage{amsthm}
\usepackage{array}
\usepackage{amsmath}
\usepackage{verbatim} 
\usepackage{hyperref}
\usepackage{color}
\usepackage{bbold}
\usepackage{epstopdf}
\usepackage{mathtools}
\usepackage{enumerate}
\usepackage{subcaption}
\usepackage{physics}
\usepackage{subcaption}
\usepackage{hyphenat}
\usepackage{multirow}
\usepackage{braket}

\newtheorem{proposition}{Proposition}
\newtheorem*{proposition*}{Proposition}

\newtheorem{theorem}{Theorem}
\newtheorem*{theorem*}{Theorem}
\newtheorem{corollary}{Corollary}[theorem]
\newtheorem*{corollary*}{Corollary}

\theoremstyle{definition}

\begin{document}

\title{\large Achieving Heisenberg limit under noisy conditions with quantum Zeno dynamics and dynamical decoupling}

\author{Ke Zeng}
\affiliation{ Institute of Fundamental and Frontier Sciences,
University of Electronic Science and Technology of China, Chengdu 610051, China}
\affiliation{Key Laboratory of Quantum Physics and Photonic Quantum Information, Ministry of Education, University of Electronic Science and Technology of China, Chengdu 611731, China}

\author{Bakmou Lahcen}
\affiliation{ Institute of Fundamental and Frontier Sciences,
University of Electronic Science and Technology of China, Chengdu 610051, China}
\affiliation{Key Laboratory of Quantum Physics and Photonic Quantum Information, Ministry of Education, University of Electronic Science and Technology of China, Chengdu 611731, China}

\author{Yu Jiang}
\affiliation{ Institute of Fundamental and Frontier Sciences,
University of Electronic Science and Technology of China, Chengdu 610051, China}
\affiliation{Key Laboratory of Quantum Physics and Photonic Quantum Information, Ministry of Education, University of Electronic Science and Technology of China, Chengdu 611731, China}

\author{Kok Chuan Tan}
\email{bbtankc@gmail.com}
\affiliation{ Institute of Fundamental and Frontier Sciences,
University of Electronic Science and Technology of China, Chengdu 610051, China}
\affiliation{Key Laboratory of Quantum Physics and Photonic Quantum Information, Ministry of Education, University of Electronic Science and Technology of China, Chengdu 611731, China}
\affiliation{School of Physics, University of Electronic Science and Technology of China, Chengdu 610054, China}

\date{\today}

\begin{abstract}
Quantum Zeno dynamics (QZD) and dynamical decoupling (DD) are useful tools that enable the effective suppression of noise in quantum systems. We consider the problem of when (i) noise can be suppressed and (ii) Heisenberg limit (HL) can be achieved in quantum metrology, and prove necessary and sufficient conditions for when QZD and DD are useful for achieving these two goals. We further show this condition can hold even when the Hamiltonian-not-in-Lindblad-span (HNLS) condition fails, allowing QZD/DD to recover Heisenberg scaling beyond what is certified by the usual HNLS-based quantum error correction (QEC) criterion. Finally, we demonstrate that the combination of both techniques can allow individually imperfect QZD and DD strategies to saturate HL.
\end{abstract}
\maketitle

\section{Introduction}
The goal of quantum metrology is to attain the highest level of precision allowed by the laws of quantum mechanics~\cite{caves1981quantum,giovannetti2004quantum,pezze2018quantum}. A metrology experiment typically consists of (i) preparation of a probe state, (ii) interaction with the probe state that depends on a physical parameter $\omega$, followed by (iii) a suitable measurement on the final state to estimate the value of $\omega$. A major obstacle to this is environmental noise~\cite{breuer2002theory}, which imposes strict limitations on the achievable precision. In such cases, the best achievable precision typically scales with $\sim 1/t$, where $t$ is the interaction time with the encoding system. This is called the standard quantum limit (SQL). In contrast, the fundamental limit allowed by quantum mechanics is  $\sim 1/t^2$, which is called the Heisenberg Limit (HL). Equivalently, one may also choose to distribute the encoding of $\omega$ across multiple correlated probes, which leads to similar $\sim 1/N$ (SQL) or $\sim 1/N^2$ (HL) scalings, where $N$ is the number of probes. 

Quantum Zeno dynamics (QZD) and dynamical decoupling (DD) are powerful tools that enable the effective suppression of noise in quantum systems. QZD is based on the well known quantum Zeno effect, where the continuous observation of a quantum system can effectively freeze the system in its initial state~\cite{von2013mathematische,sakurai2020modern,presilla1996measurement,nakazato1996temporal,panov1996quantum, koshino2005quantum}. Frequent observation of an unstable quantum system can inhibit its natural time evolution or decay~\cite{osti_4554238,osti_4825360,misra1977zeno,chiu1977time}.  Quantum systems can also be constrained to evolve within some subspace of a larger Hilbert space, which is then called quantum Zeno dynamics (QZD)~\cite{facchi2008quantum,smerzi2012zeno, lerner2018quantum,schafer2014experimental}.  In relation to quantum metrology, it has been shown that for non-Markovian noise $\sim 1/N^{3/4}$ scaling is achievable in the Zeno limit, i.e., when measurement times are sufficiently short~\cite{chin2012quantum,matsuzaki2011magnetic}. Recently, QZD was applied to achieve HL in the presence of amplitude damping noise~\cite{Liu2025}.

DD is a method of removing unwanted couplings by applying strong controls to the system~\cite{viola1998dynamical,viola2002quantum,cappellaro2006principles,uhrig2007keeping}, and is closely related to QZD. For example, ``bang-bang" decoupling is a DD technique that has been shown to be formally equivalent to QZD~\cite{facchi2004unification, hahn2022unification} via von Neumann’s ergodic theorem~\cite{reed1980methods}. Furthermore, QZD can be achieved by applying repeated kicks to the system, similar to many DD protocols~\cite{burgarth2020quantum,kato1950adiabatic,burgarth2019generalized,burgarth2021eternal}. Current literature suggests that DD and QZD are closely related, complementary techniques that suppress environmental noise.

These observations raise a fundamental question: under what general microscopic conditions can QZD or DD suppress system–environment coupling without simultaneously erasing the metrological signal?

Our methodology assumes a general time-independent system–environment interaction, with no other assumptions about the nature of the noise. We derive necessary and sufficient conditions for QZD in the Zeno limit and for leading-order DD in the fast-control limit to suppress environmental coupling while preserving nontrivial parameter encoding. Both are governed by the requirement that the signal Hamiltonian lie outside the span generated by the identity and the system-side interaction operators, which we refer to as the Hamiltonian-not-in-interaction-span (HNIS) condition. In contrast to the Hamiltonian-not-in-Lindblad-span (HNLS) condition developed for QEC-assisted metrology under Markovian dynamics~\cite{zhou2018achieving,Mann2025}, HNIS is formulated directly at the microscopic Hamiltonian level without assuming a Markovian Lindblad description~\cite{manzano2020short}. In the Markovian regime, HNIS can remain satisfied even when HNLS fails for the corresponding Lindblad model. Finally, we derive a sufficient condition for hybrid QZD–DD protection and show that combining the two techniques can recover Heisenberg scaling when either strategy alone is imperfect.

\section{Results}
\subsection{QZD and DD for quantum metrology}

Consider the Hilbert space $\mathcal{H} = \mathcal{H}_S \otimes \mathcal{H}_E$, where $\mathcal{H}_S$ and $\mathcal{H}_E$ are the Hilbert spaces of the system and environment respectively. The total Hamiltonian can be written as
\begin{equation} \label{eqn::H_tot}
   H_{\rm tot}= H_S\otimes \mathbb{1}_E+\mathbb{1}_S\otimes H_E+H_{SE},
\end{equation} where $H_S$ and $H_E$ are local Hamiltonians acting on subsystem $S$ and $E$,  while $H_{SE}$ describes the interaction between the system and environment. According to the Stinespring Dilation Theorem~\cite{Watrous2018}, any quantum noise process can be represented as a unitary evolution $U_{\rm tot} = \exp(-iH_{\rm tot} t)$ on the state $\rho_S \otimes |0\rangle  _E \langle 0|$ followed by a partial trace over the environment. Assume that $H_{\rm tot}$ is time independent and the system Hamiltonian $H_S = \omega G $ is parametrized by some real physical parameter $\omega$.  The goal is to estimate $\omega$ with the highest possible level of precision. It is known that for the estimation of a single parameter $\omega$, the ultimate achievable precision limit is given by the quantum Cramér-Rao bound~\cite{Rao1992}
\begin{equation}
\delta ^2 \omega  \ge \frac{1}{\nu F},
\end{equation} where $\nu$ is the number of repeated experiments and $F$  is the quantum Fisher information (QFI) defined as $F= \operatorname{Tr}[\rho_S L_\omega^2]$
where $\rho_S$ is the probe state and $L_\omega$ is the symmetric logarithmic derivative satisfying $\frac{d\rho_S}{d\omega} = \frac{1}{2}(\rho_S L_\omega + L_\omega \rho_S)$. We say that HL is achieved if $F$ scales with $\sim t^2$.

Let $\Pi_p = \Pi_p^2$ be a projector onto a subspace $\mathcal{H}_p$ of $\mathcal{H}$ and $U(t) = \exp(-iH_{\rm tot} t)$ be the unitary governing the time evolution of the total system. It can be shown that repeated projections onto $\mathcal{H}_p$ lead to an effective unitary evolution $U_Z$ acting on $\mathcal{H}_p$ in the quantum Zeno limit~\cite{facchi2008quantum}: 
\begin{equation} \label{eqn::QZD}
U_Z(t) = \lim_{n\rightarrow \infty} [\Pi_p U(t/n)\Pi_p]^n = \Pi_p \exp (-i H_Z t),
\end{equation} where $H_Z = \Pi_p H_{\rm tot} \Pi_p$ is the effective Zeno Hamiltonian. For any initial state $\rho_{SE}$ belonging to $\mathcal{H}_p$,  the projector $\Pi_p$ in front can be dropped and the evolution is just $\exp (-i H_Z t)$. This is a direct consequence of the fact that $\Pi_p \ket{\psi}_{SE} = \ket{\psi}_{SE}$ for any $\ket{\psi}_{SE} \in \mathcal{H}_p$. This effective dynamics is called QZD.

The primary goal of DD is to add an additional time dependent control term $H_C(t)$ acting on $\mathcal{H}_S$, transforming the total Hamiltonian: \begin{equation}
    H_{\rm tot}\left(t\right)= H_S\left(\omega\right)\otimes \mathbb{1}_E+ H_C\left(t\right)\otimes \mathbb{1}_E+\mathbb{1}_S\otimes H_E+H_{SE}.
\end{equation} Let ${U_C}\left( t \right) = {\mathcal{T}}\exp\left( { - i\int\limits_0^t {ds{H_C}\left( s \right)} } \right)$, where $\mathcal{T}$ denotes the time-ordering operator, then in the control frame the time evolution of the global state is described by the unitary operator: \begin{equation}{\tilde  U_{\rm tot}}\left( \omega, t \right) = {\mathcal{T}}\exp\left( { - i\int\limits_0^t {ds{\tilde  H_{\rm tot}}\left( s \right)} } \right),\end{equation} where ${\tilde H_{\rm tot}}\left( t \right) = \left( U_C^{\dagger}\left( t \right)\otimes \mathbb{1}_E\right){H_{\rm tot}}\left( U_C\left( t \right)\otimes \mathbb{1}_E\right)$.  
In this article, we will work within the ideal bang--bang control regime, in which the control pulses are instantaneous and have unbounded strength, while implementing finite unitary transformations on the system. Let $t=nT$, and let $U_C(kT)$ denote the toggling-frame control operator during the $k$th time interval of duration $T$. 
The first term of the Magnus expansion gives
\begin{equation}\label{eqn::first_order_Htot}
\overline{H}_n^{(0)}=H_{S}^{\text{eff}}\otimes\mathbb{1}_E+\mathbb{1}_S\otimes H_E+H_{SE}^{\text{eff}},
\end{equation}
where $H_S^{\text{eff}} = \frac{1}{n}\sum\nolimits_{k = 0}^{n - 1} {U_C^{\dagger}\left( {kT} \right){H_S}\left( \omega  \right)U_C{{\left( {kT} \right)}}}$, and $H_{SE}^{\text{eff}} = \frac{1}{n}\sum\nolimits_{k = 0}^{n - 1} {\left( {U_C^{\dagger}\left( {kT} \right) \otimes {\mathbb{1}_E}} \right){H_{SE}}\left( {U_C{{\left( {kT} \right)} } \otimes {\mathbb{1}_E}} \right)}$.

Leading-order decoupling is achieved when the system $H_{SE}^{\text{eff}}= c \mathbb{1}_S \otimes J_E$ for some choice of $H_C(t)$, constant $c$ and operator $J_E$.
Accordingly, all DD statements in the subsequent sections are within the context of the bang--bang framework. In this framework, the higher-order Magnus contributions vanish in the fast-control limit, so that the evolution generated by the first Magnus term becomes asymptotically exact. A derivation of this fact is provided in  Methods~\ref{Method:magnus_of_DD} for the convenience of the reader. Throughout this work, achieving HL via QZD/DD refers to protection protocols that suppress the effective system--environment interaction while preserving nontrivial parameter encoding.

$H_{\rm tot}$ describes a general time independent system environment interaction. As a special case, we will also consider Markovian interactions, which are described by the Lindblad master equation~\cite{manzano2020short, Lindblad76} 
\begin{equation} \label{eqn::Lindblad}
   \dv{\rho_S}{t} = -i[H_S(\omega), \rho_S] + \sum_{i=1}^r\left (  L_i \rho_S L_i^\dagger - \frac{1}{2}\acomm{L_i^\dagger L_i}{\rho_S} \right),
\end{equation} where $L_i$ are called Lindblad jump operators.

Finally, we will use the Hilbert-Schmidt inner product $\langle A,B \rangle \coloneqq \operatorname{Tr}(A^\dagger B)$. Orthogonality of matrices and operator subspaces is defined with respect to this inner product.

\subsection{Necessary and sufficient conditions for QZD} \label{sec::QZDmain}

Consider $H_{\rm tot}$ from Eq.~\ref{eqn::H_tot}. The interaction part of the Hamiltonian can always be written as a sum over products $H_{SE} = \sum_i S_i \otimes E_i$,  where $S_i , E_i$ have non-zero support. Assume also that $\{ E_i \}$ forms a linearly independent set, which can always be achieved by taking $\{ E_i \}$ to be any basis in operator space. Under these light assumptions, we would like to know when QZD is useful in quantum metrology.

Our first result solves this problem and provides a linear algebraic condition that is simple to check, assuming that the system environment interaction $H_{SE}$ is known. 
 For the remainder of this article, we define the interaction span as the operator subspace $\mathcal{I}_{SE}\coloneqq\operatorname{span}\{\mathbb{1},S_i\}$. We say that the signal Hamiltonian $H_S$ satisfies the HNIS condition when $H_S\notin\mathcal{I}_{SE}$.
This result is stated as the following theorem:

\begin{theorem} \label{thm::QZD}
Consider the Hamiltonian $H_{\rm tot}$ where $H_S = \omega G$ and $H_{SE} = \sum_i S_i \otimes E_i$, where $S_i , E_i$ have non-zero support, and $\{E_i\}$ is linearly independent. Then, assuming we have access to a noiseless ancilla, the environment can be decoupled from the system and the parameter $\omega$ can be estimated with HL precision via QZD iff HNIS the condition holds. The projector $\Pi_p$ corresponding to the QZD subspace acts only on the system of interest $S$ and the ancilla.
\end{theorem}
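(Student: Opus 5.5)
We prove both directions by working with a projector $\Pi_p = P\otimes \mathbb{1}_E$, where $P$ is a projector on $\mathcal{H}_S\otimes\mathcal{H}_A$ (system plus noiseless ancilla). The Zeno Hamiltonian is then
\[
H_Z=\omega\, PGP\otimes\mathbb{1}_E+P\otimes H_E+\sum_i PS_iP\otimes E_i,
\]
with $G$ and $S_i$ understood as $G\otimes\mathbb{1}_A$ and $S_i\otimes\mathbb{1}_A$. Since $\{E_i\}$ is linearly independent, we can make a clean identification. Decoupling means the $E$-dependent part of $H_Z$ acts trivially on $\mathrm{range}(P)$, up to terms proportional to $P\otimes J_E$. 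Because $P\otimes J_E$ only contributes a global phase correlated with the environment, which factors out for a product input, we require $PS_iP=c_iP$ for all $i$. Preserving nontrivial encoding means $PGP\neq gP$ for any scalar $g$. Under these two requirements the restricted dynamics is $e^{-i\omega PGPt}\otimes e^{-i(H_E+\sum_i c_iE_i)t}$. Choosing a probe that is an equal superposition of eigenvectors of $PGP$ with distinct eigenvalues $\lambda_{\max}\neq\lambda_{\min}$ then gives QFI $t^2(\lambda_{\max}-\lambda_{\min})^2$, which is HL. Conversely, if $PGP\propto P$ the QFI is zero. If some $PS_iP$ is not proportional to $P$, then by linear independence of the $E_i$ the environment stays coupled and the protocol is not a decoupling. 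So the theorem reduces to a linear algebra statement: there exists a projector $P$ on $S\otimes A$ with $PS_iP\propto P$ for all $i$ and $PGP\not\propto P$ iff $G\notin\operatorname{span}\{\mathbb{1},S_i\}$. HNIS for $H_S=\omega G$ with $\omega\neq0$ is equivalent to this.

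\emph{Necessity.} Suppose $G=a\mathbb{1}+\sum_i b_iS_i$. Then
\[
PGP=aP+\sum_i b_i PS_iP=\Big(a+\sum_i b_ic_i\Big)P,
\]
so the signal becomes a global phase and the QFI vanishes. This direction is immediate.

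\emph{Sufficiency.} This is the main obstacle, because we must construct $P$. This is where the ancilla is used, in the spirit of the HNLS code construction of Zhou et al. Assume $G\notin\mathcal{I}_{SE}$. Decompose $G=G_\parallel+G_\perp$ with respect to the Hilbert--Schmidt inner product, where $G_\perp\neq0$ is Hermitian, traceless (since $\mathbb{1}\in\mathcal{I}_{SE}$), and orthogonal to each $S_i$. We may take the $S_i$ Hermitian by regrouping into Hermitian parts, since $H_{SE}$ is Hermitian. Write the spectral decomposition $G_\perp=\rho_+-\rho_-$, with $\rho_\pm\ge0$ of orthogonal support and $\operatorname{Tr}\rho_+=\operatorname{Tr}\rho_-=\tau>0$. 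Define the two-dimensional code
\[
|c_\pm\rangle=\big(\sqrt{\rho_\pm/\tau}\otimes\mathbb{1}_A\big)|\Phi\rangle,
\]
where $|\Phi\rangle$ is an unnormalized maximally entangled state with a copy $A\cong S$, and let $P=|c_+\rangle\langle c_+|+|c_-\rangle\langle c_-|$. Then for any system operator $X$,
\[
\langle c_\pm|X\otimes\mathbb{1}|c_\pm\rangle=\operatorname{Tr}(X\rho_\pm)/\tau,\qquad \langle c_+|X\otimes\mathbb{1}|c_-\rangle=\operatorname{Tr}\big(\sqrt{\rho_+}X\sqrt{\rho_-}\big)/\tau .
\]
Orthogonality of $G_\perp$ to $S_i$ gives $\operatorname{Tr}(S_i\rho_+)=\operatorname{Tr}(S_i\rho_-)$, so the diagonal entries agree. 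The off-diagonal entries, however, need not vanish. This is the delicate step. If needed, we enlarge the ancilla and use a larger Hermitian/block construction, mirroring the standard argument that Knill--Laflamme-type conditions with a nontrivial logical generator are solvable iff $G\notin\operatorname{span}\{\mathbb{1},S_i\}$. Concretely, the plan is to invoke the known HNLS code construction with the Lindblad span replaced by $\operatorname{span}\{\mathbb{1},S_i\}$. Only first-order conditions ($PS_iP\propto P$) are needed here, so the construction is linear and the same semidefinite-programming/duality argument applies. It yields a code with $PS_iP=c_iP$ and $PGP$ having two distinct eigenvalues, with gap proportional to $\operatorname{Tr}(G G_\perp)/\tau=\|G_\perp\|_2^2/\tau>0$. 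Since $P$ acts only on $S\otimes A$, the final claim of the theorem follows, and the QFI computation above gives HL.
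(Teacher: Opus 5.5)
Your reduction to a linear-algebra statement and your necessity argument match the paper's. The paper extracts $\Pi_p(\mathbb{1}_a\otimes S_i)\Pi_p=c_i\Pi_p$ via a dual basis $\{F_i\}$ to the linearly independent $\{E_i\}$, which is the same use of linear independence.

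\textbf{The gap is in sufficiency.} You build the right code. But you then assert that the cross terms $\mathrm{Tr}(\sqrt{\rho_+}X\sqrt{\rho_-})/\tau$ ``need not vanish''. In their place you offer an unspecified ``larger Hermitian/block construction'' and an appeal to an HNLS-type semidefinite-programming/duality argument. As written, the identity $PS_iP=c_iP$ is therefore never established, so sufficiency is not actually proved.

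\textbf{The missing step is one line, and the difficulty you anticipate does not arise.}
\begin{itemize}
\item Your $\rho_\pm$ are the positive and negative parts of the spectral decomposition of $G_\perp$, so they have orthogonal supports and $\sqrt{\rho_-}\sqrt{\rho_+}=0$.
\item By cyclicity, $\mathrm{Tr}(\sqrt{\rho_+}X\sqrt{\rho_-})=\mathrm{Tr}(X\sqrt{\rho_-}\sqrt{\rho_+})=0$ for every $X$. In particular $\langle c_+|c_-\rangle=0$, so $P$ is a genuine rank-two projector.
\item Equivalently, if you write $|\Phi\rangle$ in the eigenbasis of $G_\perp$, then $|c_+\rangle$ and $|c_-\rangle$ occupy orthogonal ancilla subspaces.
\end{itemize}

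This is exactly the paper's mechanism. The paper writes $H_\perp=\tfrac12\mathrm{Tr}|H_\perp|(\rho_0-\rho_1)$ and chooses purifications $|p_i\rangle=\sum_j\sqrt{\rho_i}|j\rangle_S|\psi_{i,j}\rangle_a$ with $\langle\psi_{i,j}|\psi_{i',j'}\rangle=\delta_{ii'}\delta_{jj'}$. The mutually orthogonal ancilla supports kill $\langle p_0|S_i\otimes\mathbb{1}_a|p_1\rangle$ for any splitting of $H_\perp$, not only the Jordan one. Your choice gets this automatically.

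With the cross terms gone, the rest of your argument completes the proof exactly as in the paper: the diagonal equality $\mathrm{Tr}(S_i\rho_+)=\mathrm{Tr}(S_i\rho_-)$ from $G_\perp\perp S_i$, and the splitting $\mathrm{Tr}(G_\perp^2)/\tau>0$ between the diagonal entries of $PGP$. No SDP duality or enlarged ancilla is needed.
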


Theorem~\ref{thm::QZD} is the first main result in this article, and provides a linear algebraic condition that can be systematically checked to determine if HL is reachable using a QZD strategy. The requirements for doing this is that the generator $G$ in $H_S=\omega G$, as well as the interaction Hamiltonian $H_{SE}$ are known completely. The decomposition $H_{SE} = \sum_i S_i \otimes E_i$ can always be performed by choosing any set of linearly independent operators acting on the environment subsystem $E$. The HNIS condition is superficially quite similar to the HNLS condition for QEC in Ref.~\cite{zhou2018achieving}, however, we note that there are regimes where one condition can be satisfied but not the other, making QZD and QEC complementary techniques for achieving HL. The differences between the HNIS and HNLS conditions are discussed in greater detail under Section~\ref{sec::HNISvsHNLS}.

\subsection{Necessary and Sufficient Conditions for DD} \label{sec:: DDmain}
Previously, we proved exact conditions where QZD is useful for quantum metrology. We will now consider the conditions where DD is useful under similar assumptions. For QZD, we needed to find projector $\Pi_p$ where the effective dynamics suppresses the noise without suppressing the signal encoding of the parameter $\omega$. For DD, the goal is to find control Hamiltonian $H_C(t)$ acting on subsystem $S$ or equivalently some set of control unitary operators $\{ U(kT) \}_{k=0}^{n-1}$  such that (i) $H_{SE}^{\text{eff}} = c \mathbb{1}_S \otimes J_E$  and (ii) $H_S^{\text{eff}} \not\propto \mathbb{1}_S$ are satisfied (see Eq.~\ref{eqn::first_order_Htot}). Condition (i) decouples the system-environment interaction, thus suppressing noise, while condition (ii) ensures nontrivial unitary encoding of parameter $\omega$ so that HL can be achieved.  Our second main result is stated as the following theorem:

\begin{theorem}
\label{thm::DD}
Consider the Hamiltonian $H_{\rm tot}$ where $H_S = \omega G$ and $H_{SE} = \sum_i S_i \otimes E_i$, where $S_i, E_i$ have non-zero support, and $\{E_i\}$ is an orthonormal set. Then the environment can be decoupled from the system and the parameter $\omega$ can be estimated with HL precision via DD  iff the HNIS condition holds. 
\end{theorem}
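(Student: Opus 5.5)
The argument works entirely with the first-order Magnus term of Eq.~\ref{eqn::first_order_Htot}; the bang--bang limit (Methods~\ref{Method:magnus_of_DD}) lets us drop higher orders. Write $\mathcal{A}$ for the averaging map $X\mapsto \frac1n\sum_k U_C^\dagger(kT) X U_C(kT)$. Since $\mathcal{A}$ acts only on the system factor, $H_{SE}^{\rm eff}=\sum_i \mathcal{A}(S_i)\otimes E_i$ and $H_S^{\rm eff}=\omega\,\mathcal{A}(G)$. The map $\mathcal{A}$ is linear, unital, and trace preserving. It is also self-adjoint with respect to the Hilbert--Schmidt inner product, because the group-averaged twirls used in DD run over unitary groups closed under inverse; for general sequences we only need linearity.

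\textbf{Necessity.} Suppose decoupling holds, i.e.\ $\sum_i \mathcal{A}(S_i)\otimes E_i = c\,\mathbb{1}_S\otimes J_E$. Take the Hilbert--Schmidt inner product on the environment factor with $E_j$; orthonormality of $\{E_j\}$ gives $\mathcal{A}(S_j)=c\langle E_j,J_E\rangle\,\mathbb{1}_S=: c_j\mathbb{1}_S$ for every $j$. This is where orthonormality is used. Linearity and unitality then show that every element of $\mathcal{I}_{SE}=\operatorname{span}\{\mathbb{1},S_i\}$ is mapped by $\mathcal{A}$ into $\operatorname{span}\{\mathbb{1}\}$. So if HNIS failed, i.e.\ $G\in\mathcal{I}_{SE}$, then $H_S^{\rm eff}\propto\mathbb{1}_S$. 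That contradicts requirement (ii), and the encoding becomes a global phase with vanishing QFI.

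\textbf{Sufficiency.} Assume $G\notin\mathcal{I}_{SE}$. We need a control sequence whose averaging map sends each $S_i$ to a multiple of $\mathbb{1}$ but does not send $G$ to one. The most natural route is to choose $\mathcal{A}$ as a twirl over a unitary group $\mathcal{G}$, so that $\mathcal{A}$ is the Hilbert--Schmidt orthogonal projector onto the commutant $\mathcal{G}'$. Decompose $G=g_0\mathbb{1}+\sum_i g_iS_i+G_\perp$, where $G_\perp\neq0$ is Hermitian and orthogonal to $\mathcal{I}_{SE}$. The target is a group whose commutant contains $\mathbb{1}$ and $G_\perp$ but has trivial overlap with the traceless parts of the $S_i$. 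The candidate is the abelian group $\{e^{-i\theta G_\perp}\}$, discretised as a finite cyclic group when the spectrum of $G_\perp$ permits, or else as the group of diagonal phase unitaries in the eigenbasis of $G_\perp$. Its twirl is the pinching $\mathcal{P}$ onto the eigenspaces of $G_\perp$. This gives $\mathcal{A}(G)=g_0\mathbb{1}+\sum_i g_i\mathcal{P}(S_i)+G_\perp$. If $\mathcal{P}(S_i)$ is not already a multiple of $\mathbb{1}$, the group must be refined, for instance by adjoining unitaries that permute within the degenerate eigenspaces, or by working with the algebra generated by $G_\perp$ alone.

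\textbf{Main obstacle.} The hard step is showing that a suitable averaging exists in general. I would handle it with a dimension-counting and commutant argument. Let $\mathcal{C}$ be the $*$-algebra generated by $\mathbb{1}$ and $G_\perp$, and take $\mathcal{G}$ to be the unitary group of $\mathcal{C}'$. Then the twirl projects onto $\mathcal{C}''=\mathcal{C}$, which contains $G_\perp$, so $\mathcal{A}(G)$ has a nonzero component along $G_\perp$ and is not $\propto\mathbb{1}$. It remains to handle the $S_i$. Because the $S_i$ are orthogonal to $G_\perp$ but not necessarily to its powers, I expect to need to replace $G_\perp$ by a Hermitian operator $K\in\operatorname{span}\{\mathbb{1},G\}+\mathcal{I}_{SE}^\perp$ whose generated algebra is exactly $\operatorname{span}\{\mathbb{1},K\}$, i.e.\ $K$ has two eigenvalues. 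Such a $K$ would be a suitably chosen two-valued (projector-like) operator with nonzero overlap with $G$ and orthogonal to the traceless $S_i$. Then $\mathcal{A}(X)=\frac{\operatorname{Tr}X}{d}\mathbb{1}+\frac{\langle K_0,X\rangle}{\|K_0\|^2}K_0$, where $K_0$ is the traceless part of $K$, kills each $S_i$ up to identity and retains $G$. Constructing this $K$, or proving it exists, is the step where I am least certain. I would try first a projector onto a subspace chosen so that its traceless part equals a normalized component of $G_\perp$, and if that fails, an explicit finite-dimensional induction.
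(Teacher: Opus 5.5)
Your necessity argument is correct and matches the paper's. The paper routes it through Proposition~\ref{proposition4} and an auxiliary pinching, but the core step is the same: pair $\sum_i\mathcal{A}(S_i)\otimes E_i=c\,\mathbb{1}_S\otimes J_E$ with the orthonormal $E_j$ to get $\mathcal{A}(S_j)\propto\mathbb{1}$, then use linearity.

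The gap is in sufficiency, and averaging over a unitary group cannot close it in general. A group twirl is the Hilbert--Schmidt projection onto the commutant $\mathcal{G}'$, which is a unital $*$-algebra. For the twirl to send every $S_i$ to a multiple of $\mathbb{1}$, every traceless element of $\mathcal{G}'$ must lie in $\mathcal{I}_{SE}^\perp$. A nontrivial unital $*$-algebra contains a projector $P\neq 0,\mathbb{1}$, so the two-valued operator $P-\frac{\operatorname{Tr}P}{d}\mathbb{1}$ would have to lie in $\mathcal{I}_{SE}^\perp$.

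Here is a case where that fails. Take $d=3$ and $G=\operatorname{diag}(1,0,-1)$. Let the $S_i$ be a basis of the seven traceless Hermitian operators orthogonal to $G$, paired with orthonormal $E_i$ on an environment of dimension at least three. HNIS holds, yet $\mathcal{I}_{SE}^\perp=\mathbb{C}G$, whose nonzero Hermitian elements all have three distinct eigenvalues. So the two-valued $K$ you are looking for does not exist, and no twirl decouples while keeping $G$.

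In this example no mixed-unitary map can even fix $G$. Fixed points of a unital channel form an algebra, so such a map would also fix $G^2-\frac{2}{3}\mathbb{1}=\operatorname{diag}(\frac13,-\frac23,\frac13)$. That operator lies in $\mathcal{I}_{SE}$ and is not a multiple of $\mathbb{1}$. Any successful averaging map must therefore be non-idempotent and strictly shrink the $G_\perp$ component.

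That non-idempotent map is the missing idea, and the paper supplies it. Let $K=G_\perp/\sqrt{\operatorname{Tr}G_\perp^2}$, with eigenbasis $\{\ket{m}\}$ and eigenvalues $\lambda_m$, and define $\Phi_{K,\beta}(X)=\frac{\operatorname{Tr}X}{d}\mathbb{1}+\beta\operatorname{Tr}(KX)K$.

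\begin{itemize}
\item Orthogonality gives $\Phi_{K,\beta}(S_i)=\frac{\operatorname{Tr}S_i}{d}\mathbb{1}$ and $\Phi_{K,\beta}(G)=\frac{\operatorname{Tr}G}{d}\mathbb{1}+\beta G_\perp$, with no condition on the spectrum of $G_\perp$.
\item The map is mixed unitary. It is your first step, the pinching in $\{\ket{m}\}$, followed by a convex mixture of permutation unitaries. On the diagonal it acts by $A_{mn}=\frac{1}{d}+\beta\lambda_m\lambda_n$.
\item Rows and columns of $A$ sum to one because $\operatorname{Tr}K=0$. Its entries are nonnegative when $|\beta|\le 1/(d\|K\|^2)$. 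Birkhoff--von Neumann then writes $A$ as a convex combination of permutation matrices.
\item The resulting weights are generally not a group average. Discretizing them into frequencies $m_j/n$ gives the pulse sequence.
\end{itemize}

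In the example above, $\beta=\frac{2}{3}$ gives $A=\frac{1}{3}(\mathbb{1}+P_{(12)}+P_{(23)})$. This is a uniform mixture over a set of permutations that is not a group, and it yields $H_S^{\rm eff}=\frac{2}{3}\omega G$. Your two-valued-$K$ projection is the special idempotent case $\beta=1$ of this family. It is available only when $\mathcal{I}_{SE}^\perp$ happens to contain such an operator.
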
  

As part of the proof of sufficiency in Theorem~\ref{thm::DD}, a mixed unitary channel $\Phi_{K,\beta}$ that achieves HL whenever the condition is satisfied is constructed. It can be shown that this channel is always optimal for single qubit systems, and optimal for a class of multi-qubit Hamiltonians. A channel $\Phi$ is said to be optimal in the sense that if one decomposes $H_S$ into its parallel and orthogonal components such that $H_S^\parallel \in \operatorname{span}\{ \mathbb{1}, S_i\}$, $ H_S^\perp \perp   \operatorname{span}\{ \mathbb{1}, S_i\} $, and $H_S = H_S^\parallel + H_S^\perp$, then the channel $\Phi$ satisfies $\Phi(H_S^\perp) = H_S^\perp$. This is the best that any mixed unitary channel can do. To see this, suppose the system Hamiltonian only has the parallel component such that $H_S = H_S^\parallel$. HL can never be achieved in this situation according to Theorem~\ref{thm::DD}, so parallel component $H_S^\parallel$ does not contribute to achieving HL, so the best one can do is to perfectly preserve the orthogonal component, i.e. $\Phi(H_S^\perp) = H_S^\perp$. The construction and discussion of the optimality of the channel $\Phi_{K,\beta}$, together with the proof of Theorem~\ref{thm::DD}, is presented in Methods~\ref{Method:HNIS_with_DD}.

QZD realizes a general code-space compression, whereas DD is restricted to mixed-unitary averaging on the system. Remarkably, despite the fact that the effective Hamiltonians $H_Z$ in QZD, and $H^{\text{eff}}_S$ in DD are defined differently, an essentially identical HNIS condition applies to both QZD and DD. There are, however, some physical differences in terms of the actual implementation of both approaches. Theorem~\ref{thm::QZD} assumes access to a noiseless ancilla, which is similar to the QEC approach~\cite{zhou2018achieving} where the HNLS condition applies. In contrast, DD does not require a noiseless ancilla. However, since we are working within the bang--bang regime, it does assume the ability to perform instant and arbitrarily strong unitary operations on the system of interest. Again, this is an assumption that overlaps with but is not completely identical to the assumptions made in the QEC approach. The actual requirement for achieving HL using any of the QZD, DD and QEC approaches may therefore differ drastically, depending on the technical constraints of the experiment.

\begin{figure*}[t]
    \centering
    \begin{minipage}[t]{0.95\textwidth}
        \includegraphics[width=\textwidth]{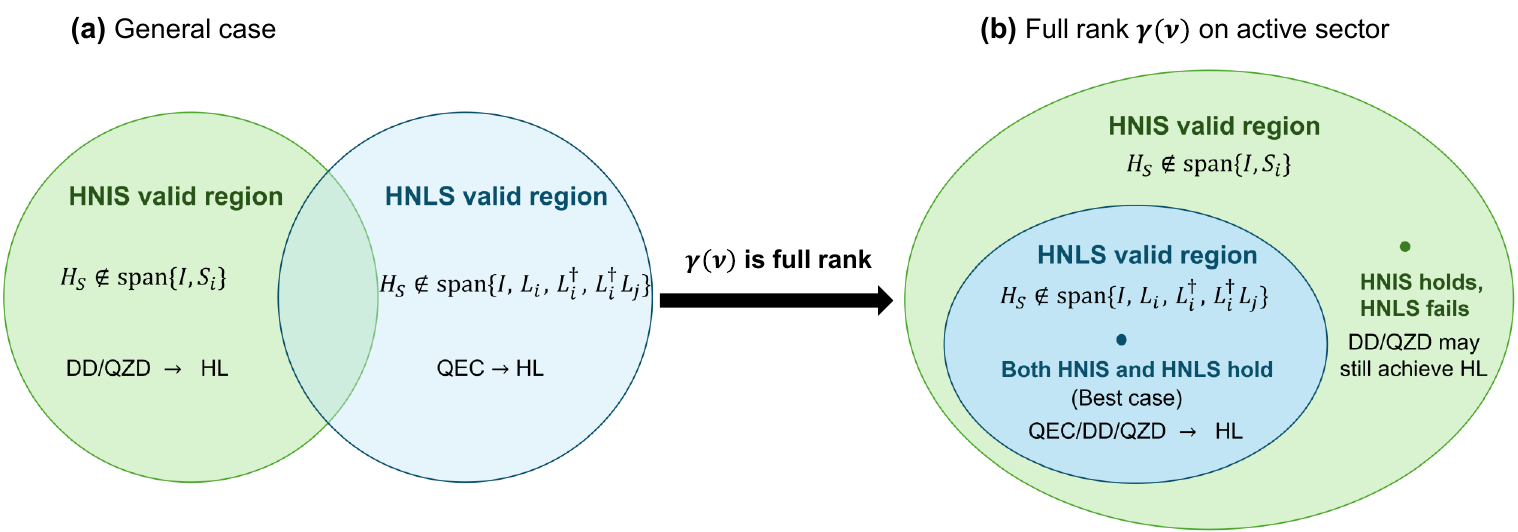}
        \caption{Relation between HNIS and HNLS.
        \textbf{(a)} In general, the HNIS- and HNLS-valid regions need not be nested.
        \textbf{(b)} If the bath correlation matrix $\gamma(\nu)$ is full rank for every Bohr-frequency sector, Proposition~\ref{thm::Markov} shows that HNLS implies HNIS, while the converse does not hold in general, as shown by Proposition~\ref{prop::HNLS}.}
        \label{fig:HNIS}
    \end{minipage}
\end{figure*}

\subsection{Relation between HNIS and HNLS} \label{sec::HNISvsHNLS}
Recall that Markovian time evolution is described by the Lindblad master equation (see Eq.~\ref{eqn::Lindblad}). Throughout this work, QEC refers to the standard framework underlying the HNLS theorem, where the time evolution of the probe is described by a fixed Markovian Lindblad generator, and it is assumed that noiseless ancillas and arbitrarily fast and accurate control and recovery operations are available. Under such conditions, the achievability of HL is well understood due to an important result from Ref.~\cite{zhou2018achieving}, which states that HL is achievable via QEC if and only if $H_S \notin \operatorname{span}\{ \mathbb{1}, L_i, L^\dagger_i, L^\dagger_i L_j\}$, where $L_i$ are the Lindblad jump operators describing the Markovian noise. This is known as the HNLS condition. A sufficient condition was recently derived for hidden Markov models~\cite{Mann2025}. 

HNIS and HNLS are formulated at different descriptive levels. HNIS is defined directly from the microscopic system--environment interaction, whereas HNLS is defined from the jump operators of a fixed Markovian Lindblad generator. In general, neither condition implies the other. However, when the latter is derived from the former under the Born--Markov and secular approximations, and assuming that the bath correlation matrix $\gamma(\nu)$ is full rank on every active Bohr-frequency sector, HNLS implies HNIS. The converse does not hold in general. This relation is illustrated in Fig.~\ref{fig:HNIS} and stated in the Proposition below.

\begin{proposition} \label{thm::Markov}
     Given any microscopic description of the system-environment interaction $H_{\rm tot}$, let $\{{ L_i(\nu)}\}$ be the corresponding Lindblad jump operators under the Born-Markov and secular approximations. Suppose that the bath correlation matrix $\gamma(\nu)$ is full rank for every active Bohr-frequency sector $\nu$. Then the HNLS condition~\cite{zhou2018achieving} $H_S \notin \operatorname{span}\{ \mathbb{1}, L_i, L^\dagger_i, L^\dagger_i L_j\}$ is satisfied implies that the HNIS condition $H_S \notin \operatorname{span}\{\mathbb{1}, S_i\}$ is also  satisfied.
\end{proposition}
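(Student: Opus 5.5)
We prove the contrapositive: if HNIS fails, then HNLS fails. Suppose $H_S \in \operatorname{span}\{\mathbb{1}, S_i\}$. We will show that $\operatorname{span}\{\mathbb{1}, S_i\} \subseteq \operatorname{span}\{\mathbb{1}, L_i(\nu), L_i^\dagger(\nu), L_i^\dagger(\nu) L_j(\nu')\}$, which immediately gives $H_S$ in the Lindblad span.

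First we recall the standard Born--Markov--secular construction. Write $H_{SE}=\sum_\alpha A_\alpha\otimes B_\alpha$ with Hermitian $A_\alpha$. Since $\{E_i\}$ is linearly independent, each $S_i$ is a linear combination of the $A_\alpha$, and conversely, so $\operatorname{span}\{S_i\}=\operatorname{span}\{A_\alpha\}$. We decompose each $A_\alpha$ into Bohr-frequency components,
\begin{equation}
A_\alpha(\nu)=\sum_{\epsilon'-\epsilon=\nu}\Pi(\epsilon)A_\alpha\Pi(\epsilon'),
\end{equation}
using the spectral projectors $\Pi(\epsilon)$ of the system Hamiltonian. These satisfy $A_\alpha=\sum_\nu A_\alpha(\nu)$. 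The dissipator then has the form $\sum_\nu\sum_{\alpha\beta}\gamma_{\alpha\beta}(\nu)\big(A_\beta(\nu)\rho A_\alpha^\dagger(\nu)-\tfrac12\{A_\alpha^\dagger(\nu)A_\beta(\nu),\rho\}\big)$. Diagonalizing the positive matrix $\gamma(\nu)=\sum_k \lambda_k(\nu) u_k u_k^\dagger$ yields the jump operators $L_k(\nu)=\sqrt{\lambda_k(\nu)}\sum_\beta (u_k)_\beta A_\beta(\nu)$.

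The key step uses the full-rank assumption. Because $\gamma(\nu)$ is full rank, every $\lambda_k(\nu)>0$ and the matrix $[\sqrt{\lambda_k}(u_k)_\beta]$ is invertible. Hence $\operatorname{span}_k\{L_k(\nu)\}=\operatorname{span}_\beta\{A_\beta(\nu)\}$ for each active $\nu$. Summing over $\nu$ gives $A_\alpha=\sum_\nu A_\alpha(\nu)\in\operatorname{span}\{L_k(\nu)\}$. Consequently every $S_i$ lies in $\operatorname{span}\{L_k(\nu)\}$, which is contained in the HNLS span. Together with $\mathbb{1}$, this gives the desired inclusion of spans, so HNIS failing forces HNLS failing.

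The main obstacle is the bookkeeping around the term ``active'' Bohr frequency. Sectors with $\nu$ where $\gamma(\nu)=0$ contribute no jump operators, yet $A_\alpha(\nu)$ may be nonzero there. We will therefore read the hypothesis as requiring full rank on every $\nu$ for which some $A_\alpha(\nu)\neq0$, and check that the Lamb-shift term does not alter the jump operators. A related subtlety is that the paper's $H_S$ in the secular approximation is $\omega G$ (possibly with the Lamb shift absorbed). The argument only needs the span inclusion, which is independent of which Hamiltonian we are testing, so this does not affect the proof.
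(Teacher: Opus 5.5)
Your proposal is correct and follows essentially the same route as the paper. Both arguments decompose the coupling into Bohr-frequency components and use the full rank of $\gamma(\nu)$ to get $\operatorname{span}\{L_k(\nu)\}=\operatorname{span}\{A_\beta(\nu)\}$ sector by sector. Summing over $\nu$ then places each coupling operator, and hence $\operatorname{span}\{\mathbb{1},S_i\}$, inside the Lindblad span.

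Your detour through a Hermitian decomposition $A_\alpha$ is harmless. Only the direction $S_i\in\operatorname{span}\{A_\alpha\}$ is needed, and it does follow from the linear independence of $\{E_i\}$. The ``conversely'' direction is not needed, and it would additionally require $\{B_\alpha\}$ to be linearly independent.
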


 To show this, we use the microscopic derivation of the Lindblad master equation. Following Ref.~\cite{manzano2020short}, we can always write $H_{SE} = \sum_{i,\nu} S_i(\nu) \otimes E_i $. This gives rise to two equivalent expressions of the master equation: 
\begin{align}
    &\dv{\rho_S}{t} +i[H_S, \rho_S]  \\
    & =  \sum_{\nu,i, j} \gamma_{ij}(\nu) \left (  S_i(\nu) \rho_S S_j(\nu)^\dagger - \frac{1}{2}\acomm{S_j(\nu)^\dagger S_i(\nu)}{\rho_S} \right) \\
    &=  \sum_{\nu, i}d_{ii}(\nu)\left (  J_i(\nu) \rho_S J_i(\nu)^\dagger - \frac{1}{2}\acomm{J_i(\nu)^\dagger J_i(\nu)}{\rho_S} \right).
\end{align}

Let $V$ be a unitary matrix that diagonalizes the matrix $\gamma(\nu),$ such that $V^\dagger\gamma(\nu) V = \operatorname{diag} [d_{ii}(\nu)],$ then the operators $J_i(\nu)$ are obtained from $S_i(\nu)$ by the corresponding unitary transformation. The Lindblad jump operators can then be written as $L_i(\nu)=\sqrt{d_{ii}(\nu)}J_i(\nu)=\sqrt{d_{ii}(\nu)}\sum_j V_{i j} S_j(\nu)$. Since $\gamma(\nu)$ is full rank on every sector $\nu$, $d_{ii}(\nu)> 0$ for every corresponding mode. Therefore, multiplication by $\sqrt{d_{ii}(\nu)}$ does not remove any operator direction, while $V(\nu)$ is unitary, and hence
$ \operatorname{span}\{S_i(\nu)\}=\operatorname{span}\{J_i(\nu)\}=\operatorname{span}\{L_i(\nu)\}$
for every active $\nu$. Moreover, because $S_i=\sum_\nu S_i(\nu)$, 
\begin{equation}
    \operatorname{span}\{\mathbb{1},S_i\}\subseteq\operatorname{span}\{\mathbb{1},L_i(\nu)\}\subseteq
\operatorname{span}\{\mathbb{1},L_i,L_i^\dagger,L_i^\dagger L_j\}.
\end{equation} Therefore, if the HNLS condition is satisfied,
$H_S\notin\operatorname{span}\{\mathbb{1},L_i,L_i^\dagger,L_i^\dagger L_j\}$,
then necessarily
$H_S\notin\operatorname{span}\{\mathbb{1},S_i\}$,
which is precisely the HNIS condition. 

The full-rank assumption is essential for this implication. If $\gamma(\nu)$ has a zero eigenvalue, a nonzero linear combination $J_i(\nu)$ of the microscopic coupling operators may have vanishing dissipative rate and therefore makes the operators set incomplete. In that case, the above inclusion of operator spaces is not guaranteed. However, Proposition~\ref{thm::Markov} does not imply that QEC/QZD/DD are equal substitutes whenever the noise is Markovian. This is explained by the following Proposition:

\begin{proposition} \label{prop::HNLS}
    There exists system-environment Hamiltonian $H_{\rm tot}$ and corresponding Lindblad jump operators $\{ L_i\}$ under the Born-Markov and secular approximations where the HNLS condition is not satisfied but the HNIS condition in Theorems~\ref{thm::QZD} and \ref{thm::DD} is satisfied.
\end{proposition}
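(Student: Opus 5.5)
This is an existence statement, so I will prove it with one explicit example. The idea is to exploit the fact that HNLS involves the products $L_i^\dagger L_j$, whose span can be strictly larger than $\operatorname{span}\{\mathbb{1},S_i\}$. I will take a single qubit coupled to a thermal bosonic bath through a non-Hermitian-looking but Hermitian transverse coupling,
\begin{equation}
H_{\rm tot}=\frac{\omega}{2}\sigma_z\otimes\mathbb{1}_E+\mathbb{1}_S\otimes\sum_k \omega_k b_k^\dagger b_k+\sigma_x\otimes\sum_k g_k(b_k+b_k^\dagger).
\end{equation}
Here $S_1=\sigma_x$ and $E_1=\sum_k g_k(b_k+b_k^\dagger)$. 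The interaction span is therefore $\operatorname{span}\{\mathbb{1},\sigma_x\}$, and $H_S\propto\sigma_z$ lies outside it, so HNIS holds. By Theorems~\ref{thm::QZD} and \ref{thm::DD}, HL is then reachable by QZD or DD. For concreteness one can also exhibit the DD protocol directly: alternate bang-bang $\sigma_z$ pulses. These give $H_{SE}^{\rm eff}=0$ and $H_S^{\rm eff}=H_S$.

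Next I will carry out the standard Born--Markov--secular derivation following Ref.~\cite{manzano2020short}. I decompose $\sigma_x$ into Bohr-frequency components of $H_S$, giving $S(\omega)=\sigma_-$ and $S(-\omega)=\sigma_+$. For a thermal bath at temperature $T>0$ whose spectral density is nonzero at $\omega$, the rates satisfy $\gamma(\omega)>0$ and $\gamma(-\omega)=e^{-\beta\omega}\gamma(\omega)>0$. Each sector is one-dimensional with a positive rate, so the full-rank hypothesis of Proposition~\ref{thm::Markov} also holds here. The jump operators are $L_1=\sqrt{\gamma(\omega)}\sigma_-$ and $L_2=\sqrt{\gamma(-\omega)}\sigma_+$. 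The Lamb-shift term is proportional to $\sigma_z$ and is absorbed into $H_S$. It does not change the span question, and it can be set aside by taking the generator to be $\sigma_z$ itself.

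Finally I will check that HNLS fails. We have $L_1^\dagger L_1\propto\sigma_+\sigma_-=(\mathbb{1}+\sigma_z)/2$. Hence $\sigma_z\in\operatorname{span}\{\mathbb{1},L_i,L_i^\dagger,L_i^\dagger L_j\}$, which in fact equals the whole space $\operatorname{span}\{\mathbb{1},\sigma_\pm,\sigma_z\}$ of $2\times2$ matrices. So HNLS is violated while HNIS holds, which proves the proposition.

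The step needing the most care is the physical admissibility of the example. I must confirm that both secular sectors carry strictly positive rates, which is needed for $\sigma_z$ to appear through $L^\dagger L$, and that the Born--Markov and secular approximations are legitimate, for instance under weak coupling and $\omega\gg\gamma$. Even at zero temperature, $L_1^\dagger L_1$ alone already yields $\sigma_z$, so the conclusion is robust. This makes precise that the inclusion in Proposition~\ref{thm::Markov} is strict: the quadratic terms $L_i^\dagger L_j$ enlarge the span beyond the linear interaction span.
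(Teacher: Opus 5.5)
Your proof is correct and is essentially the paper's own counterexample. Both use a qubit with a $\sigma_z$-diagonal signal (the paper takes $\omega|1\rangle\langle 1| = \frac{\omega}{2}(\mathbb{1}-\sigma_z)$, you take $\frac{\omega}{2}\sigma_z$) coupled through $\sigma_x\otimes E$. The secular jump operators $\sigma_\pm$ together with the products $\sigma_\pm^\dagger\sigma_\pm$ then span all $2\times 2$ matrices, while the interaction span is only $\operatorname{span}\{\mathbb{1},\sigma_x\}$.

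The remaining differences are cosmetic. You fix a concrete thermal bosonic bath, check that both rates are positive, and note that $T=0$ still works. You also exhibit an explicit $\sigma_z$-pulse DD sequence, whereas the paper leaves the bath generic and exhibits the ancilla-assisted QZD projector $|00\rangle\langle 00|+|11\rangle\langle 11|$.
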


This can be shown by counterexample. Consider $H_{\rm tot} = \omega\ket{1}_s\bra{1}\otimes \openone_E + \openone_S \otimes H_E + \sigma_x \otimes E$. One may verify that the corresponding Lindblad jump operators after the Born-Markov and secular approximations are $\sigma_+ \coloneqq \ketbra{1}{0}$ and $\sigma_- \coloneqq \ketbra{0}{1}$ so the Lindblad span is $\mathrm{span}\{\ketbra{1}{1}, \ketbra{1}{0}, \ketbra{0}{1}, \ketbra{0}{0} \}$. Since this spans the entire operator space, no signal Hamiltonian $H_S$ can satisfy the HNLS condition. However, the QZD/DD condition is satisfied since $H_S = \omega \ketbra{1}{1} \notin \mathrm{span}\{ \openone, \sigma_x\}$.   A detailed discussion of this counterexample, supported by numerical simulation demonstrating HL achievability using QZD/DD, can be found in Method~\ref{App:Counterexample}.

Proposition~\ref{prop::HNLS} implies that QEC can fail in the Markovian regime because errors caused by the Lindblad jump operators $\{L_i \} $ cannot be corrected without also correcting the signal Hamiltonian $H_S$ in certain situations. QZD/DD avoids this by effectively decoupling the system from the environment,  preventing errors caused by $\{L_i \} $ from accumulating in the first place.

The distinction originates from the different levels at which the two conditions are formulated, HNIS characterizes the microscopic interaction before Markovian coarse-graining, whereas HNLS characterizes the resulting fixed Lindblad generator.
Consequently, microscopic control through QZD/DD may still preserve Heisenberg scaling in regimes not captured by the HNLS condition for the corresponding fixed Lindblad description.

\begin{figure*}[t]
    \centering
    \begin{minipage}[t]{0.45\textwidth}
        \includegraphics[width=\textwidth]{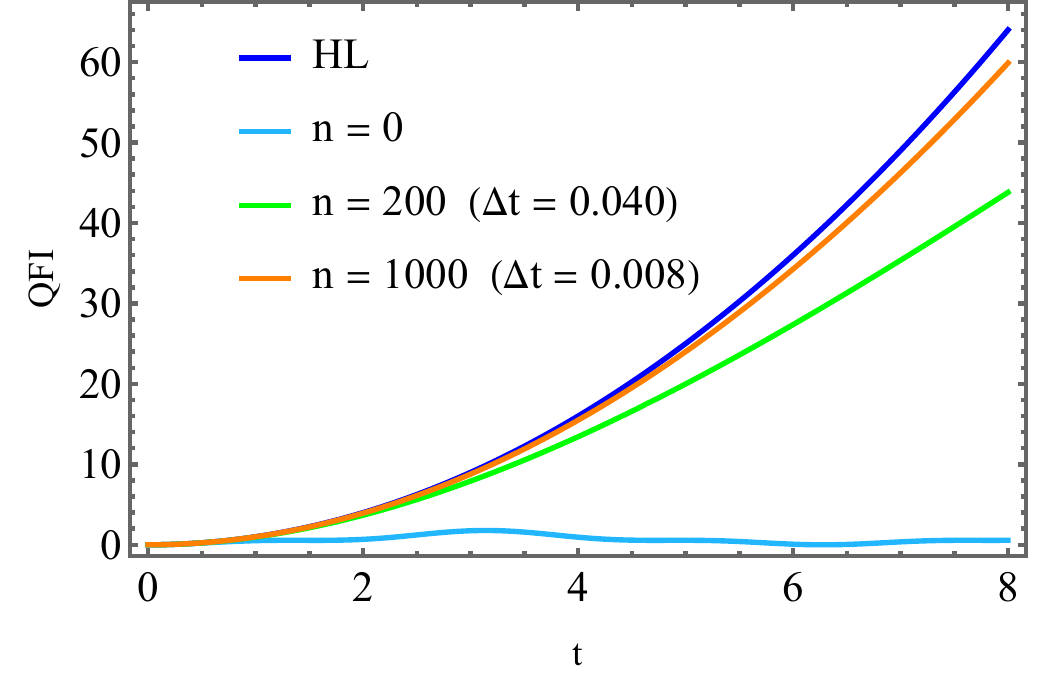}
        \caption{QFI at different projection intervals $\Delta t$ and number of projections $n$. As the projection frequency increases, the QFI-time  curve approaches the Heisenberg limit.}
        \label{fig:first}
    \end{minipage}
    \hfill
    \begin{minipage}[t]{0.45\textwidth}
        \includegraphics[width=\textwidth]{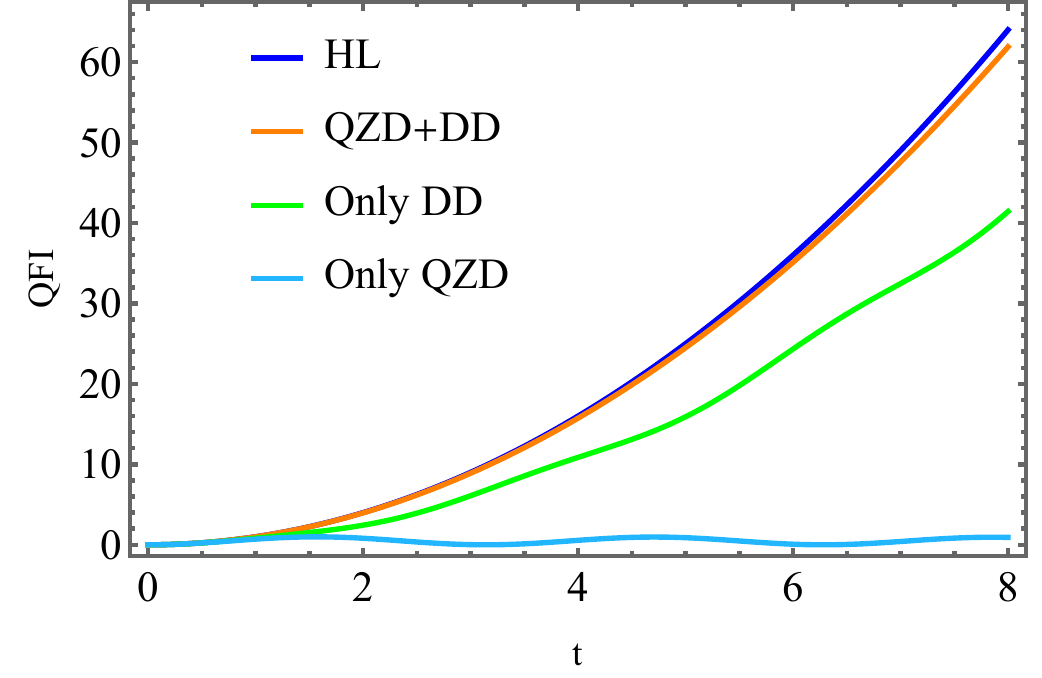}
        \caption{QFI at number of projections $n=1000$, with and without QZD and/or DD. QZD or DD alone were unable to saturate HL, but the combination succeeds.}
        \label{fig:second}
    \end{minipage}
\end{figure*}
\subsection{Hybrid Protocols} \label{sec::hybrid}

In Section~\ref{sec:: DDmain}, it was discussed that even though the HNIS condition for both QZD and DD are identical, the actual physical implementation and technical requirements of both approaches can differ drastically, with QZD requiring access to a noiseless ancilla, and DD requiring arbitrary, instantaneous unitary gates. To lower the technical requirements of each approach, it may help to consider hybrid protocols where QZD and DD are combined in a single protocol.  We find that this hybrid approach is indeed possible under certain circumstances, and the conditions where this can be done is summarized in the following Proposition:

\begin{proposition} 
\label{prop::hybrid}
    Let $\Pi_p$ be a projector acting on the system and ancilla. A hybrid QZD and DD protocol that achieves HL is possible when $\tilde{H}_S  \notin \mathrm{span}\{ \openone, \tilde{S}_i \}$, where $\tilde{H}_S = \Pi_p \openone_a \otimes H_S \Pi_p$,  $\tilde{S}_i \coloneqq \Pi_p \openone_a \otimes S_i \Pi_p$, $H_{SE} = \sum_i S_i \otimes E_i$, and $\{ E_i \}$ is an orthonormal set. 
\end{proposition}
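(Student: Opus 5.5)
The plan is to apply the two earlier results one after the other. QZD is used first, to compress the dynamics into the code space $\mathcal{H}_p$. Then the DD construction from the sufficiency part of Theorem~\ref{thm::DD} is run inside that compressed space.

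\textbf{Step 1: reduce the dynamics to the code space.} By Eq.~\eqref{eqn::QZD}, frequent projection with $\Pi_p$, which acts on the system and ancilla only, turns the evolution into the Zeno Hamiltonian
\begin{equation}
H_Z=\Pi_p H_{\rm tot}\Pi_p=\tilde H_S\otimes\openone_E+\Pi_p\otimes H_E+\sum_i \tilde S_i\otimes E_i .
\end{equation}
This identity holds because $\Pi_p\otimes\openone_E$ commutes with every operator acting only on $E$, so each term factorises as written.

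\textbf{Step 2: recognise a Theorem~\ref{thm::DD}-type problem.} On the effective system Hilbert space $\mathcal{H}_p^{Sa}=\Pi_p(\mathcal{H}_a\otimes\mathcal{H}_S)$, $H_Z$ has exactly the form assumed in Theorem~\ref{thm::DD}. The signal generator is $\tilde H_S=\omega\,\Pi_p(\openone_a\otimes G)\Pi_p$, the coupling operators are $\tilde S_i$, and the environment operators $\{E_i\}$ are still orthonormal. Here $\Pi_p$ plays the role of the identity on the reduced space. Any $\tilde S_i$ that vanishes can be dropped, since this does not change the span, which removes the non-zero-support requirement. The hypothesis $\tilde H_S\notin\mathrm{span}\{\openone,\tilde S_i\}$ is then precisely the HNIS condition for the reduced model. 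Strictly one should use $\Pi_p$ rather than $\openone$ in that span. Any component of $\openone$ orthogonal to the code space acts trivially there, so the two spans agree on $\mathcal{H}_p$; this can be handled by adding $\Pi_p$ to the span explicitly.

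\textbf{Step 3: run DD inside the code space.} We invoke the sufficiency direction of Theorem~\ref{thm::DD}, in particular the mixed-unitary channel $\Phi_{K,\beta}$ built in Methods~\ref{Method:HNIS_with_DD}. It supplies bang-bang control unitaries $\{U_C(kT)\}$ on $\mathcal{H}_p^{Sa}$ with two properties:
\begin{itemize}
\item the averaged coupling is trivial, $\frac1n\sum_k U_C^\dagger\tilde S_iU_C\propto\Pi_p$;
\item the averaged signal is nontrivial, $\frac1n\sum_k U_C^\dagger\tilde H_SU_C\not\propto\Pi_p$.
\end{itemize}
Each such unitary is extended to the full system–ancilla space as $U_C\oplus(\openone-\Pi_p)$. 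This extension commutes with $\Pi_p$, so the pulses can be interleaved with the Zeno projections.

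\textbf{Step 4: combine the two limits and conclude.} The hybrid protocol alternates projections and pulses. The first-order Magnus term then yields the decoupled effective Hamiltonian
\begin{equation}
H^{\rm eff}_{S}\otimes\openone_E+\Pi_p\otimes(H_E+J_E).
\end{equation}
Preparing a probe in the code space and measuring gives QFI $\sim t^2\,(\lambda_{\max}-\lambda_{\min})^2$ of $H^{\rm eff}_S/\omega$, which is the Heisenberg limit.

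The main obstacle I expect is justifying the joint limit in Step 4. We need the Zeno limit (many projections per DD period) and the bang-bang limit to commute, or at least to compose. The approach I would try first is to take the projection rate much faster than the pulse rate. Within each DD interval the evolution is then $\Pi_p e^{-iH_Z T}$ up to $o(1)$ error. Those intervals can be concatenated using the fact that the control unitaries preserve $\mathcal{H}_p$. The Magnus bound from Methods~\ref{Method:magnus_of_DD} then applies, because the generator involved is the bounded operator $H_Z$.
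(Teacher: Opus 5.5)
Your proposal is correct and is essentially the paper's argument: the paper likewise forms the Zeno Hamiltonian $H_Z$, reads off $\tilde H_S$ and $\tilde H_{SE}=\sum_i \tilde S_i\otimes E_i$ with the same orthonormal $\{E_i\}$, and applies Theorem~\ref{thm::DD} to this reduced model. Your extra points are refinements that the paper's one-paragraph proof leaves implicit: extending the pulses as $U_C\oplus(\openone-\Pi_p)$, taking the Zeno limit before the bang--bang limit, and reading the span with $\Pi_p$ in place of $\openone$. The last one is genuinely needed, because when $\Pi_p\neq\openone$ a trivial $\tilde H_S\propto\Pi_p$ (e.g.\ $H_S=\omega\openone$ with all $\tilde S_i=0$) satisfies the literal full-space condition but yields only a global phase.
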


Proposition~\ref{prop::hybrid} is a direct consequence of applying Theorem~\ref{thm::DD} after a QZD projection. To prove the statement, recall that the effective Zeno Hamiltonian after projection is $$H_Z = (\Pi_p \otimes \openone_E) (\openone_a \otimes H_{\rm tot}) (\Pi_p \otimes \openone_E).$$ The effective Zeno system-environment interaction is given by $$\tilde{H}_{SE} = \Pi_p \otimes \openone_E H_{SE} \Pi_p \otimes \openone_E = \sum_i  \tilde{S}_i  \otimes E_i,$$ where $\tilde{S}_i \coloneqq \Pi_p \openone_a \otimes S_i \Pi_p$. The effective Zeno system Hamiltonian after the projection is given by $$ \tilde{H}_S =\Pi_p \openone_a \otimes H_S \Pi_p.$$  The statement in Proposition~\ref{prop::hybrid} is then obtained by applying Theorem~\ref{thm::DD} onto the effective Zeno Hamiltonians $\tilde{H}_{SE}$ and $\tilde{H}_{S}$, resulting in the required condition $\tilde{H}_S \notin \mathrm{span}\{ \openone, \tilde{S}_i \}$ for the hybrid protocol. An example where a hybrid protocol combines two simpler but imperfect QZD and DD protocols to achieve HL is discussed in Section~\ref{sec::examples}.

\subsection{Examples} \label{sec::examples}

Consider the following interaction between a qutrit and the environment: 
\begin{equation}
    H_{\rm tot} = H_S \otimes \mathbb{1}_E + S\otimes E.
\end{equation}
The interpretation of this Hamiltonian is particularly simple. Suppose $E$ is Hermitian and the environment is in a particular state $\ket{\psi}_E ,$ then the system locally ``sees" the Hamiltonian $(H_S+g S)$, where $g= \bra{\psi} E \ket{\psi}$ is some unknown number since the state of the environment is unknown. The noise due to this interaction can therefore be simulated semiclassically by treating $g$ as a random variable drawn from some distribution.

In the first example, choose \begin{equation}
    H_S = \begin{pmatrix}
        \omega/2 & 0 & 0 \\
        0 & -\omega/2 & 0 \\
        0 & 0 & 0 \\ 
    \end{pmatrix} \quad \text{and} \quad  S = \begin{pmatrix}
        0 & 0 & 1 \\
        0 & 0 & 1 \\
        1 & 1 & -1 \\ 
    \end{pmatrix} .
\end{equation} One observes that the QZD condition (Theorem~\ref{thm::QZD}) is satisfied and $H_S \notin \operatorname{span} \{ \mathbb{1}, S\}$ since $H_S$ and $S$ have zero overlap. Choose $\Pi_p = \ketbra{0}{0}+\ketbra{1}{1}$ and initialize the probe in state $(\ket{0}+\ket{1})/\sqrt{2} $. We plot the QFI for various projection frequencies over time in Fig.~\ref{fig:first} and verify that for high projection frequencies, the QFI curve does indeed approach HL. The remaining gap when the number of projections is $n=1000$ can be attributed to leakage due to the finite number of projections, which vanishes in the Zeno limit.

To demonstrate how the hybrid approach considered in Section~\ref{sec::hybrid} may be implemented, consider a second example that has a more complicated Hamiltonian with additional nonzero terms:\begin{equation}
    H_S = \begin{pmatrix}
        \omega/2 & 0 & 0 \\
        0 & -\omega/2 & 0 \\
        0 & 0 & 1 \\ 
    \end{pmatrix} \quad \text{and} \quad  S = \begin{pmatrix}
        1 & 1 & 1 \\
        1 & 1 & 1 \\
        1 & 1 & -1 \\ 
    \end{pmatrix} .
\end{equation}

As in the previous example, the QZD condition $H_S \notin \operatorname{span} \{ \mathbb{1}, S\}$ is still satisfied. However, a noiseless ancilla is required to achieve HL because there are two positive eigenvalues (for detailed explanation of why, see proof of Theorem~\ref{thm::QZD}). Suppose we do not have access to a noiseless ancilla due to experimental constraints, and we perform the same ancilla free projection as in the first example, $\Pi_p = \ketbra{0}{0}+\ketbra{1}{1}.$ This projection alone will not decouple the environment interaction (see Fig.~\ref{fig:second}), but achieving HL is still possible via a hybrid protocol. Indeed, one may verify that $\tilde{H}_S = \Pi_p H_S \Pi_p \notin \mathrm{span} \{ \openone, \tilde{S}\}$, where $\tilde{S} \coloneqq \Pi_p S \Pi_p = \sum_{i,j=0,1} \ketbra{i}{j} $ so Proposition~\ref{prop::hybrid} is satisfied. This implies an ancilla free hybrid protocol can achieve HL by applying the projection $\Pi_p$ in conjunction with dynamical decoupling.

In Fig.~\ref{fig:second} we numerically simulate this hybrid protocol. Taking the total time evolution, we divide it into $n$ intervals where the projective measurement $\Pi_p$ is performed for QZD. Within each of these intervals, we further apply DD pulses that correct rotations about the $X$-axis~\cite{khodjasteh2007performance}. The resulting QFI curves over time are plotted in Fig.~\ref{fig:second}. We see that the hybrid strategy outperforms the individual QZD and DD strategies and saturates the HL QFI curve in the Zeno limit.

\section{Discussion}

In this article, we considered the achievability of HL under noisy conditions using QZD and DD as the primary tools. In our main results, we proved two sets of necessary and sufficient conditions that identify exactly when QZD and DD can succeed (Theorems~\ref{thm::QZD} and \ref{thm::DD}).
Despite the fact that the effective Hamiltonians in the QZD and DD approaches are different, both results are governed by the same HNIS condition.
Our main methodology assumes time independent system-environment interaction, but otherwise imposes no other constraints on the type of noise. These conditions are algebraic, and can be checked directly, assuming that the system-environment Hamiltonian $H_{\text{tot}}$ is known. However, QZD approach requires access to a noiseless ancilla. For DD, we are working within the bang--bang control regime, which assumes instantaneous and arbitrarily strong unitary operations.

Observing that there are some formal similarities between the HNIS condition which determines the achievability of HL for QZD and DD, and the HNLS condition which determines the achievability of HL using QEC for Markovian noise~\cite{zhou2018achieving}, a comparison of the similarities and differences between the HNIS condition and the HNLS condition was made in Section~\ref{sec::HNISvsHNLS}. In Proposition~\ref{thm::Markov}, our findings show that for a large class of Markovian system-environment interactions the HNIS condition is satisfied whenever the HNLS condition is satisfied. Proposition~\ref{prop::HNLS} then shows that there exist instances in the Markovian regime where HNLS fails for the corresponding fixed Lindblad description, while HNIS remains satisfied. This implies QZD/DD can achieve HL in certain instances even when HNLS fails, assuming all the underlying assumptions concerning the QZD and DD approaches are met. QZD/DD and QEC are therefore not completely overlapping approaches to achieving HL, and there will be use cases where one approach is more appropriate than the other.

Comparing QEC in the HNLS framework with QZD/DD in the HNIS framework, we expect QZD to require more frequent measurement intervals than a strategy using QEC. However, QEC requires fast quantum controls in between measurement readouts, which QZD does not need. QEC and QZD methods also assume access to noiseless ancillas, which are expensive resources, while DD only employs fast quantum control but does not require ancillas.  QZD/DD can therefore be less resource intensive alternatives, when the ability to perform QEC is limited by access to fast quantum controls, measurement readouts, or  to noiseless ancillas.  Conversely, we expect QEC to have an advantage when the limiting factor is the speed of the measurement when compared to QZD, or when the limiting factor is the speed of the quantum control when compared to DD. 

Finally, we discuss when a hybrid combination of QZD plus DD methods can achieve HL. Proposition~\ref{prop::hybrid} provides a sufficient condition for hybrid protocols to achieve HL. An explicit example is shown where QZD and DD individually do not saturate HL, but the combination of both methods succeeds. This is confirmed via numerical simulation in Fig.~\ref{fig:second}. It is noteworthy that in this example, the hybrid protocol is ancilla free, and the control pulses used for DD are simple in the sense that they only require rotations about one axis. This suggests hybrid protocols can be useful for designing less resource intensive protocols that can achieve HL. We leave the exploration of this research avenue for future work.

We hope that our work will help the physics community develop practical, noise resilient quantum sensors and protocols.

\section{Methods}
\subsection{Validity of the leading order Magnus approximation under DD}\label{Method:magnus_of_DD}

In the main text, the DD analysis is formulated in terms of the leading-order bang--bang generator. We now show that this approximation becomes asymptotically exact for a cyclic bang--bang implementation in the fast-control limit following the Ref.~\cite{viola2013introduction}. Consider a finite mixed-unitary averaging map
\begin{equation}
\Phi(X)=\sum_{\alpha=1}^{r}p_{\alpha}
V_{\alpha}^{\dagger}XV_{\alpha},
\qquad
p_{\alpha}\geq0,
\qquad
\sum_{\alpha=1}^{r}p_{\alpha}=1.
\label{eq:finite_mixed_unitary_cycle}
\end{equation}
Choose a control cycle of duration $T_c$ and let the control propagator remain equal to $V_{\alpha}$ for the dwell time $\Delta_{\alpha}=p_{\alpha}T_c$. Consecutive propagators are connected by instantaneous system-only pulses, and a final instantaneous pulse closes the cycle, $U_C(T_c)=\mathbb{1}_S$. During the $\alpha$th interval, the toggling-frame Hamiltonian is
\begin{equation}
\widetilde H_{\alpha}=\left(V_{\alpha}^{\dagger}\otimes\mathbb{1}_E\right)H_{\mathrm{tot}}\left(V_{\alpha}\otimes\mathbb{1}_E\right).
\end{equation}

The exact one-cycle propagator is
\begin{equation}
\widetilde U(T_c)=e^{-i\Delta_r\widetilde H_r}\cdots e^{-i\Delta_1\widetilde H_1}=e^{-iT_c\overline H(T_c)},
\end{equation}
where
$\overline H(T_c)=\sum_{\ell=0}^{\infty}\overline H^{(\ell)}$
is the Magnus average Hamiltonian. Its first two terms are
\begin{align}
\overline H^{(0)}
&=
\sum_{\alpha=1}^{r}p_{\alpha}\widetilde H_{\alpha}
\nonumber\\
&=
\Phi(H_S)\otimes\mathbb{1}_E
+\mathbb{1}_S\otimes H_E
+\Phi\otimes\openone _E(H_{SE}),
\label{eq:zeroth_order_cyclic_DD}\\
\overline H^{(1)}
&=
-\frac{iT_c}{2}
\sum_{\alpha>\beta}
p_{\alpha}p_{\beta}
\left[\widetilde H_{\alpha},\widetilde H_{\beta}\right].
\end{align}
Thus, the zeroth-order action of the cyclic block is the mixed-unitary average $\Phi$, while the first correction is proportional to $T_c$.

Assume that the toggling-frame Hamiltonian is bounded and define $\kappa:=\sup_{0\leq s\leq T_c}\|\widetilde H(s)\|_{\infty}$. For sufficiently short control cycles, average Hamiltonian theory~\cite{viola2013introduction} gives $\left\|\overline H^{(\ell)}\right\|_{\infty} = O\!\left[\kappa(\kappa T_c)^{\ell}\right]$.
Hence, in the fast-cycle regime $\kappa T_c\ll1$,
\begin{equation}\label{eq:magnus_remainder}
\left\|\overline H(T_c)-\overline H^{(0)}\right\|_{\infty}=O(\kappa^2T_c).
\end{equation}
Because the block is cyclic, it can be repeated $N$ times. At the
stroboscopic interrogation time $t=NT_c$,
\begin{equation}
\widetilde U(t)=\left[\widetilde U(T_c)\right]^N=e^{-it\overline H(T_c)}.
\end{equation}
The Duhamel formula and Eq.~\eqref{eq:magnus_remainder} give
\begin{equation}
\left\|\widetilde U(t)-e^{-it\overline H^{(0)}}\right\|_{\infty}\leq t\left\|\overline H(T_c)-\overline H^{(0)}\right\|_{\infty}=O(\kappa^2tT_c).
\end{equation}
Therefore, for fixed interrogation time $t$,
\begin{equation}
\lim_{T_c\rightarrow0}\widetilde U(t)=e^{-it\overline H^{(0)}}.
\end{equation}

Therefore, for cyclic bang--bang control with a bounded toggling-frame Hamiltonian and $\kappa T_c\ll1$, all higher-order Magnus contributions vanish in the limit $T_c\rightarrow0$, while the leading-order DD generator remains finite. This establishes the asymptotic validity of the leading-order DD approximation used in the main text for this class of fast cyclic controls. We do not claim that an arbitrary noncyclic or unstructured bang--bang sequence is completely characterized by its first Magnus term.

\subsection{Mixed unitary characterization of DD}\label{Method:Mixed_unitary_for_DD}
We establish a mixed-unitary characterization of the DD conditions used in the proof of Theorem~\ref{thm::DD}. Specifically, we characterize the existence of a set of system-only control unitaries $\{U_C(kT)\}_{k=0}^{n-1}$ that decouples the system--environment interaction while preserving a nontrivial signal Hamiltonian. The result is stated in the following proposition.

\begin{proposition}\label{proposition4}
    For any system-environment interaction $H_S\otimes \mathbb{1}_E + \mathbb{1}_S \otimes H_E + H_{SE}$ where $H_S(\omega) = \omega G$, there exists a collection of control unitary operators $\{ {U_C(kT)} \}_{k=0}^{n-1}$ that satisfies 
    \begin{equation*}
        H^{\text{eff}}_S = \frac{1}{n}\sum_{k=0}^{n-1} U_C^{\dagger}(kT) \, H_S \, U_C(kT) \not\propto \mathbb{1}_S
    \end{equation*} 
    and 
    \begin{align*}
       H_{SE}^{\text{eff}} &= \frac{1}{n}\sum_{k=0}^{n-1}\left( U_C^{\dagger}(kT)\otimes \mathbb{1}_E \right) H_{SE} \left(U_C(kT)\otimes \mathbb{1}_E\right) \\&= c \mathbb{1}_S \otimes J_E
    \end{align*}
    for some constant $c$ and Hermitian operator $J_E$ if and only if there exists mixed unitary channel $\Phi$ such that $\Phi(H_S)$ is nontrivial,
  \begin{equation*}
      \operatorname{Tr} \left\{\left[ \Phi(H_S) - \operatorname{Tr}(H_S) \frac{\mathbb{1}_S}{d} \right] \Phi \left[ \prescript{}{E}{\braket{\psi|H_{SE}|\psi}_E} \right] \right\}= 0
  \end{equation*}  
    and 
    \begin{equation*}
        \braket{i|\Phi\left[ \prescript{}{E}{\braket{\psi|H_{SE}|\psi}_E} \right]| i} = \mathrm{constant},
    \end{equation*}
    for every eigenvector $\ket{i}$ of $\Phi(H_S)$ and state vector $\ket{\psi}_E$.
\end{proposition}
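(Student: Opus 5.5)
The plan is to identify the finite control sequence with a mixed unitary channel, and then translate each of the two DD requirements into the stated trace and diagonal conditions.

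\emph{Step 1: from control sequences to channels.} Given $\{U_C(kT)\}_{k=0}^{n-1}$, define $\Phi(X)=\frac1n\sum_k U_C^\dagger(kT)XU_C(kT)$. Then $H_S^{\text{eff}}=\Phi(H_S)$ and $H_{SE}^{\text{eff}}=(\Phi\otimes\mathrm{id}_E)(H_{SE})$. Conversely, take a mixed unitary channel with rational weights $p_\alpha=m_\alpha/n$. It is realized by repeating $V_\alpha$ exactly $m_\alpha$ times in the sequence. For general real weights, I would note that the relevant conditions are linear and closed. The set of admissible weights for fixed $\{V_\alpha\}$ is therefore a rational polyhedron (the conditions have rational structure after fixing the unitaries), so a rational point exists whenever a real one does. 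Alternatively, one can invoke the cyclic dwell-time construction of Methods~\ref{Method:magnus_of_DD}, which allows arbitrary $p_\alpha$. Either route reduces the statement to a statement purely about $\Phi$.

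\emph{Step 2: reformulate decoupling by projecting onto environment states.} Expand $H_{SE}=\sum_i S_i\otimes E_i$. Then $(\Phi\otimes\mathrm{id})(H_{SE})=\sum_i\Phi(S_i)\otimes E_i$, and ${}_E\langle\psi|\cdot|\psi\rangle_E$ commutes with $\Phi$. I would show the equivalence
\[
(\Phi\otimes\mathrm{id})(H_{SE})=c\,\mathbb{1}_S\otimes J_E
\iff
\Phi\bigl[{}_E\langle\psi|H_{SE}|\psi\rangle_E\bigr]\propto\mathbb{1}_S \ \text{ for all } |\psi\rangle_E .
\]
The forward direction is immediate. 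For the reverse, polarization gives ${}_E\langle\psi|X|\phi\rangle_E\propto\mathbb{1}_S$ for all $\psi,\phi$. Hence every block of $\sum_i\Phi(S_i)\otimes E_i$ is proportional to the identity, so the operator equals $\mathbb{1}_S\otimes J_E$.

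\emph{Step 3: match the two stated conditions to proportionality to the identity.} Fix $\psi$ and write $A_\psi=\Phi[{}_E\langle\psi|H_{SE}|\psi\rangle_E]$. If $A_\psi\propto\mathbb{1}$, both conditions follow at once. The trace condition holds because $\Phi(H_S)-\operatorname{Tr}(H_S)\mathbb{1}/d$ is traceless, using that $\Phi$ is unital and trace preserving. The diagonal condition holds trivially. The converse is the main obstacle: constant diagonal entries of $A_\psi$ in an eigenbasis of $\Phi(H_S)$, plus one trace orthogonality, do not by themselves force $A_\psi\propto\mathbb{1}$. My plan here is to use the freedom in the channel. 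If $\Phi$ satisfies the conditions, compose it with the dephasing (pinching) channel $\mathcal{D}$ in the eigenbasis of $\Phi(H_S)$. This is itself mixed unitary, realized by averaging over diagonal phase unitaries. Under $\Phi'=\mathcal{D}\circ\Phi$:
\begin{itemize}
\item $\Phi'(H_S)=\Phi(H_S)$, so the signal remains nontrivial.
\item $\Phi'(A)$ becomes diagonal with constant entries, i.e.\ proportional to $\mathbb{1}$.
\end{itemize}
One subtlety needs care. The diagonal condition must hold simultaneously for all $\psi$ in a common eigenbasis. When $\Phi(H_S)$ has degenerate eigenvalues, the eigenvector choice matters, so I would read the statement as quantifying over every eigenvector, which makes full pinching on each eigenspace valid. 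Within each degenerate block I would additionally twirl by unitaries on that block. This keeps $\Phi(H_S)$ fixed and makes the block part of $A_\psi$ scalar. The trace condition then serves as a consistency check that the scalars agree with $\operatorname{Tr}$-normalization.

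\emph{Step 4: conclude.} Combining Steps 1–3 establishes the equivalence. The necessity direction takes $\Phi$ directly from the control sequence. The sufficiency direction replaces $\Phi$ by $\Phi'$ and realizes $\Phi'$ as a control sequence via Step 1.
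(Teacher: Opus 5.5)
Your argument follows the same route as the paper. For necessity, $\Phi$ is the average over the control sequence. For sufficiency, you compose $\Phi$ with the pinching in the eigenbasis of $\Phi(H_S)$, which fixes $\Phi(H_S)$ and sends each constant-diagonal operator $\Phi[{}_E\langle\psi|H_{SE}|\psi\rangle_E]$ to a multiple of $\mathbb{1}_S$. Two of your additions improve on the paper. The polarization argument in Step 2 fills in a step the paper only asserts, namely that the pinched channel yields $c\,\mathbb{1}_S\otimes J_E$. The block twirl keeps the argument valid even if the eigenbasis in the diagonal condition were allowed to depend on $\psi$, a point the paper passes over. One small correction: the scalars agree across eigenspaces because of the diagonal condition, not the trace condition. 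The trace condition is implied by the diagonal one, as the paper shows right after this proof, and it plays no role.

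The step that fails as written is the exactness claim in Step 1. With the unitaries $V_\alpha$ fixed, the constraints $\sum_\alpha p_\alpha V_\alpha^\dagger S_i V_\alpha\propto\mathbb{1}$ have as coefficients the matrix elements of $V_\alpha^\dagger S_i V_\alpha$, which are generally irrational. So the admissible weights need not form a rational polyhedron and may contain no rational point at all. For example, take $H_{SE}=\sigma_x\otimes E$ and $H_S=\sigma_z$, with three $z$-rotations sending $\sigma_x$ to the in-plane unit vectors $(1,0)$, $(0,1)$, $-(1,\sqrt{2})/\sqrt{3}$. This gives a channel satisfying every condition, yet it forces $p\propto(1,\sqrt{2},\sqrt{3})$.

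Your fallback via the dwell-time construction of the Methods is physically fine. However, it produces a non-uniform average, not the literal $\frac{1}{n}\sum_k$ over equal intervals in the statement. The paper does no better here: it only discretizes approximately (``arbitrarily accurate for sufficiently large $n$''). So this is a weakness shared with the paper rather than a flaw of your strategy.

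If you want exactness, keep the channel fixed and vary the unitaries rather than the weights. First mix in a little of the completely depolarizing channel, written as a uniform average over several rotated Weyl operator bases $\{W_lV_m\}$. This preserves both conditions. It also gives a representation with all weights positive that contains enough unitaries for the tangent spaces of the unitary orbit to span its affine hull. The implicit function theorem then lets you replace the weights by nearby rationals and perturb the unitaries so that exactly the same channel is reproduced.
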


\begin{proof} 
    First, we prove necessity. Assume that $H^{\text{eff}}_S =\frac{1}{n} \sum_{k=0}^{n-1} U_C^{\dagger}(kT) \, H_S \, U_C(kT)\not\propto \mathbb{1}_S$ and $H_{SE}^{\text{eff}} = \frac{1}{n}\sum_{k=0}^{n-1} \left(U_C^{\dagger}(kT)\otimes \mathbb{1}_E\right) \, H_{SE} \, \left(U_C(kT)\otimes \mathbb{1}_E\right) = c \mathbb{1}_S \otimes J_E$ for some constant $c$ and Hermitian operator $J_E$.

    Since the operator $H^{\text{eff}}_S - \operatorname{Tr}(H^{\text{eff}}_S)\frac{\mathbb{1}_S}{d}$ has trace zero, and we assumed $H^{\text{eff}}_S \not\propto \mathbb{1}_S$, $H^{\text{eff}}_S - \operatorname{Tr}(H^{\text{eff}}_S)\frac{\mathbb{1}_S}{d}$ must be nontrivial and  does not have any component along $\mathbb{1}_S$. Observe that $ \prescript{}{E}{\braket{\psi|H_{SE}^{\text{eff}}|\psi}_E}  \propto \mathbb{1}_S,$ which only has a component along $\mathbb{1}_S.$ This implies that 
    \begin{equation}
        \operatorname{Tr} \left\{\left[ H^{\text{eff}}_S - \operatorname{Tr}(H^{\text{eff}}_S) \frac{\mathbb{1}_S}{d} \right] \prescript{}{E}{\braket{\psi|H_{SE}^{\text{eff}}|\psi}_E} \right\} = 0.
    \end{equation}
    Note that the above expression is nontrivial only because $H^{\text{eff}}_S =\frac{1}{n} \sum_{k=0}^{n-1} U_C^{\dagger}(kT) \, H_S \, U_C(kT) \not\propto \mathbb{1}_S$.  

    Define the map $\Phi(\cdot) = \frac{1}{n}\sum_{k=0}^{n-1}U_C^{\dagger}(kT) \,(\cdot)\, U_C(kT)$. Since $\Phi$ is a convex combination of unitary maps, it is a mixed unitary channel by definition and also a CPTP map. As $\Phi$ is trace preserving, we have $\operatorname{Tr}(H^{\text{eff}}_S) =  \operatorname{Tr}(\Phi(H_S)) = \operatorname{Tr}(H_S)$. By further substituting the expressions $H^{\text{eff}}_S = \Phi(H_S)$ and $\prescript{}{E}{\braket{\psi|H_{SE}^{\text{eff}}|\psi}_E} =  \Phi\left[  \prescript{}{E}{\braket{\psi|H_{SE}|\psi}_E}  \right]$, we obtain the expression
    \begin{equation}
        \operatorname{Tr} \left\{\left[ \Phi(H_S) - \operatorname{Tr}(H_S) \frac{\mathbb{1}_S}{d} \right] \Phi \left[ \prescript{}{E}{\braket{\psi|H_{SE}|\psi}_E} \right] \right\}= 0.
    \end{equation}
    
    Let $\left \{ \ket{i} \right \}_{i=0}^{d-1}$ be the eigenbasis of $\Phi(H_S)$. Since $ \Phi \left[ \prescript{}{E}{\braket{\psi|H_{SE}|\psi}_E} \right] \propto \mathbb{1}_S$ for every $\ket{\psi_E}$, we must have $\braket{i | \Phi \left[ \prescript{}{E}{\braket{\psi|H_{SE}|\psi}_E} \right]| i} = \mathrm{constant}$ for every $i = 0,\ldots, d-1$. Therefore, there exists a mixed unitary map $\Phi$ that satisfies the require conditions such that $\Phi(H_S)$ is nontrivial.

    Next, we prove sufficiency Assume that there exists some mixed unitary map $\Phi$ such that $\operatorname{Tr} \left\{\left[ \Phi(H_S) - \operatorname{Tr}(H_S) \frac{\mathbb{1}_S}{d} \right] \Phi \left[ \prescript{}{E}{\braket{\psi|H_{SE}|\psi}_E} \right] \right\}= 0$, $\braket{i | \Phi \left[ \prescript{}{E}{\braket{\psi|H_{SE}|\psi}_E} \right]| i} = \mathrm{constant}$ for every eigenvector $\ket{i}$ of $\Phi(H_S)$ given any state vector $\ket{\psi}_E$ such that $\Phi(H_S)$ is nontrivial (not proportional to the identity).

    Since $\Phi(H_S)$ is nontrivial, the traceless operator $ \Phi(H_S) - \operatorname{Tr}(H_S) \frac{\mathbb{1}_S}{d} $ is also nontrivial. Writing the operator in its eigenbasis $\left \{ \ket{i} \right \}_{i=0}^{d-1}$, we obtain a diagonal representation 
    \begin{equation}
        \Phi(H_S) - \operatorname{Tr}(H_S) \frac{\mathbb{1}_S}{d}  = \begin{pmatrix} \lambda_0 & 0 &\cdots  & 0\\ 
    0 & \lambda_1 & \cdots & 0 \\ 
    \vdots & &\ddots & \vdots \\
    0 & 0 &\cdots& \lambda_{d-1} \end{pmatrix},
    \end{equation}
    where $\lambda_i$, $i=0,\ldots, d-1$ are the eigenvalues of  $ \Phi(H_S) - \operatorname{Tr}(H_S) \frac{\mathbb{1}_S}{d} $.
    
    Consider the operator $\Phi \left[ \prescript{}{E}{\braket{\psi|H_{SE}|\psi}_E} \right]$. Since $\braket{i | \Phi \left[ \prescript{}{E}{\braket{\psi|H_{SE}|\psi}_E} \right]| i} = \mathrm{constant}$ for every $i = 0, \ldots, d-1$, its matrix representation when written in the basis $\left \{ \ket{i} \right \}_{i=0}^{d-1}$ has the form 
    \begin{equation}
        \Phi \left[ \prescript{}{E}{\braket{\psi|H_{SE}|\psi}_E} \right] = \begin{pmatrix} c & * &\cdots  & *\\ 
    * & c & \cdots & * \\ 
    \vdots & &\ddots & \vdots \\
    * & * &\cdots& c \end{pmatrix},
    \end{equation}
    for some constant $c$. In this basis, the leading diagonals of the matrix $\Phi \left[ \prescript{}{E}{\braket{\psi|H_{SE}|\psi}_E} \right]$ is constant.

    Define the pinching channel $\Pi(\cdot) = \sum_{i=0}^{d-1} \ketbra{i} (\cdot) \ketbra{i}$. Pinching channels are defined as mixed, projective maps of the form $\Pi(\cdot) = \sum_i \Pi_i (\cdot) \Pi_i$ that satisfies $\sum_i \Pi_i = \mathbb{1}$ and $\Pi_i^2 = \Pi_i$. It is known that pinching channels are also mixed unitary channels~\cite{Watrous2018}, so for every pinching channel, there exists some distribution $p(x)$ over unitary matrices $U(x)$ such that 
\begin{equation}
    \Pi(\cdot) = \int dx \, p(x) U(x) (\cdot) U(x)^\dagger.
\end{equation}    
One may verify that the pinching channel satisfies $\Pi \circ \Phi (H_S) = \Phi (H_S)$ and $\Pi \circ \Phi \left[ \prescript{}{E}{\braket{\psi|H_{SE}|\psi}_E} \right] \propto \mathbb{1}_S $ given any $\ket{\psi}_E$. The second condition, in particular, implies that $(\Pi \circ \Phi)\otimes \mathbb{1}_E(H_{SE}) = c \mathbb{1}_S \otimes J_E$ for some Hermitian operator $J_E$. 

     Next, observe that since $\Phi$ and $\Pi$ are both mixed unitary channels, their composite $\Phi' = \Pi \circ \Phi$ is also a mixed unitary channel. This means that there exists some probability distribution density $p'(x)$  satisfying $\int dx \, p'(x)  = 1$ such that $\Phi'(\cdot) = \int p'\left(x\right)dx\, U_C^{\dagger}(x) (\cdot) U_C(x).$ We can discretize the probability density by the following substitution 
\begin{equation}
    \int dx \, p'(x) \approx \sum_{j=0}^{m-1} p'(x_j) \Delta x \approx \sum_{j=0}^{m-1} \frac{m_j}{n},
\end{equation}     
where $\frac{m_j}{n} \approx p'(x_j) \Delta x$, and $n, m_j$ are nonnegative integers satisfying $\sum_{j=0}^{m-1} m_j = n$. This approximation is arbitrarily accurate for sufficiently large $n$.

     This allows us to write 
     \begin{eqnarray}
         \begin{aligned}
             \Phi'(\cdot) &=  \int p'\left(x\right)dx\, U_C^{\dagger}(x) (\cdot) U_C(x) \\&
             = \sum_{j=0}^{m-1} \frac{m_j}{n} U_C^{\dagger}(x_j) (\cdot) U_C(x_j)
         \end{aligned}
     \end{eqnarray}
Next, relabel the unitary operators by replacing variable $x_j$ and the summation over $j$ with variable $kT$ and summation over $k$ instead. One can rewrite the composite channel $\Phi' (\cdot)$ as 
\begin{equation}
         \Phi'(\cdot)  = \frac{1}{n}\sum_{k=0}^{n-1} U_C^{\dagger}(kT) (\cdot) U_C(kT).
\end{equation}

Putting it all together, perform the substitution $n \Phi'(\cdot) \rightarrow \sum_{k=0}^{n-1} U^{\dagger}(kT) (\cdot) U_C(kT)$ to get  $H^{\mathrm{eff}}_S = \frac{1}{n}\sum_{k=0}^{n-1} U_C^{\dagger}(kT) H_S U_C(kT) \not\propto \mathbb{1}_S $ since $\Phi'(H_S) \not\propto \mathbb{1}$, and $H_{SE}^{\text{eff}} =\frac{1}{n} \sum_{k=0}^{n-1} \left(U_C^{\dagger}(kT)\otimes \mathbb{1}_E\right) \, H_{SE} \, \left(U_C(kT) \otimes \mathbb{1}_E\right) = c \mathbb{1}_S \otimes J_E$ for some constant $c$ and Hermitian operator $J_E$,  since $\Phi'\otimes \mathbb{1}_E (H_{SE}) \propto \mathbb{1}_S \otimes J_E$ for some Hermitian operator $J_E$. These are the required conditions and complete the proof.

Note that in the proof of necessity, the condition $\braket{i | \Phi [ \prescript{}{E}{\braket{\psi|H_{SE}|\psi}_E} ]| i} = \mathrm{constant}$ does not imply that $\Phi [ \prescript{}{E}{\braket{\psi|H_{SE}|\psi}_E} ] \propto \mathbb{1}$ as off diagonal elements can be nonzero. HL is achieved by further applying a type of mixed unitary channel called a pinching channel~\cite{Watrous2018}. This identifies the pinching channel as particularly important for the achievability of HL. Indeed, pinching channels play an important role in the construction of a channel that proves the sufficiency of Theorem~\ref{thm::DD} (see Method~\ref{Method:HNIS_with_DD}).

     \end{proof}
Proposition~\ref{proposition4} provides a characterization of dynamical decoupling in terms of the existence of mixed unitary channels that satisfy orthogonality conditions. The astute reader may have noticed that the Proposition~\ref{proposition4} presented here has one extra condition over the criterion used in the proof of Theorem~\ref{thm::DD} in the main text. This extra condition can be removed due to the following Proposition:
     \begin{proposition*}
         
 If the condition $
\braket{i | \Phi \left[ \prescript{}{E}{\braket{\psi|H_{SE}|\psi}_E} \right]| i} = \mathrm{constant}$  for every eigenvector $\ket{i}$ of $\Phi(H_S)$ and state vector $\ket{\psi}_E$ is satisfied, then $\operatorname{Tr} \left\{\left[ \Phi(H_S) - \operatorname{Tr}(H_S) \frac{\mathbb{1}_S}{d} \right] \Phi \left[ \prescript{}{E}{\braket{\psi|H_{SE}|\psi}_E} \right] \right\}= 0$   .\end{proposition*}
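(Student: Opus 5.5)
The plan is to evaluate the trace in the eigenbasis $\{\ket{i}\}_{i=0}^{d-1}$ of $\Phi(H_S)$, where the first factor is diagonal. Only the diagonal entries of the second factor then contribute, and the hypothesis says these are all equal. Concretely, fix $\ket{\psi}_E$ and write $A \coloneqq \Phi(H_S) - \operatorname{Tr}(H_S)\frac{\mathbb{1}_S}{d}$ and $B_\psi \coloneqq \Phi\left[\prescript{}{E}{\braket{\psi|H_{SE}|\psi}_E}\right]$.

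First, I note that $\Phi(H_S)$ is Hermitian, since $\Phi$ is a mixed unitary channel and $H_S$ is Hermitian. Hence it has an orthonormal eigenbasis $\{\ket{i}\}$, and $A$ is diagonal in that basis with entries $\lambda_i = h_i - \operatorname{Tr}(H_S)/d$, where the $h_i$ are the eigenvalues of $\Phi(H_S)$.

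Second, I use trace preservation of $\Phi$, as in the proof of Proposition~\ref{proposition4}. This gives $\sum_i h_i = \operatorname{Tr}\Phi(H_S) = \operatorname{Tr}(H_S)$, and therefore $\sum_i \lambda_i = 0$.

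Third, I expand the trace in the basis $\{\ket{i}\}$:
\begin{equation*}
\operatorname{Tr}(A B_\psi) = \sum_{i,j} \braket{i|A|j}\braket{j|B_\psi|i} = \sum_i \lambda_i \braket{i|B_\psi|i},
\end{equation*}
because the off-diagonal elements of $A$ vanish. By hypothesis $\braket{i|B_\psi|i} = c_\psi$ for all $i$, where the constant may depend on $\psi$ but not on $i$. The sum therefore equals $c_\psi \sum_i \lambda_i = 0$ by the second step. Since $\ket{\psi}_E$ was arbitrary, this is the claimed orthogonality condition.

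The only step needing care is the interpretation of ``constant'' when $\Phi(H_S)$ has degenerate eigenvalues. In that case the eigenbasis is not unique, and I read the hypothesis as holding for the basis actually used, which is all the expansion requires. The argument also does not need the off-diagonal entries of $B_\psi$ to vanish, which matches the remark after Proposition~\ref{proposition4}. So there is no real obstacle: the statement reduces to the fact that a traceless diagonal matrix is Hilbert--Schmidt orthogonal to any matrix with constant diagonal.
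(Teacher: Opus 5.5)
Your proposal is correct and is essentially the paper's own argument. Both diagonalize $\Phi(H_S)$, write the shifted operator as $\sum_i \lambda_i' |i\rangle\langle i|$ with $\sum_i \lambda_i' = 0$, and observe that only the constant diagonal entries of $\Phi[{}_E\langle\psi|H_{SE}|\psi\rangle_E]$ survive in the trace. Your explicit use of trace preservation to get $\sum_i \lambda_i'=0$, and your remark on degenerate eigenspaces, state what the paper leaves implicit.
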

     
     \begin{proof}
         Let $\Phi(H_S) = \sum_i \lambda_i \ketbra{i}{i}$ where $\lambda_i$, $\ket{i}$ are the eigenvalues and eigenvectors of $\Phi(H_S).$

         This allows us to write $$\Phi(H_S)- \frac{\mathrm{Tr}H_S}{d} \openone_S = \sum_i \lambda'_i \ketbra{i}{i},$$ where $\lambda_i' = \lambda_i - \frac{\mathrm{Tr}H_S}{d}$ and $\sum_i \lambda'_i = 0$.

        Assume that $
        \braket{i | \Phi \left[ \prescript{}{E}{\braket{\psi|H_{SE}|\psi}_E} \right]| i} = \mathrm{constant}.$ Since $\Phi \left[ \prescript{}{E}{\braket{\psi|H_{SE}|\psi}_E} \right]$ is a matrix acting on subsystem $S$, we can express it in terms of the eigenbasis $\{ \ket{i} \}$ and write $$\Phi \left[ \prescript{}{E}{\braket{\psi|H_{SE}|\psi}_E} \right]=\sum_i c_0 \ketbra{i}{i} + \sum_{i \neq j} c_{ij}\ketbra{i}{j}, $$ where $c_0$ is the constant in the assumption. We then have the required expression
        \begin{align}
            \operatorname{Tr} & \left\{\left[ \Phi(H_S) - \operatorname{Tr}(H_S) \frac{\mathbb{1}_S}{d} \right] \Phi \left[ \prescript{}{E}{\braket{\psi|H_{SE}|\psi}_E} \right] \right\} \\
            &=  \mathrm{Tr}\left \{  \left ( \sum_i \lambda_i' \ketbra{i}{i} \right ) \left (  \sum_i c_0 \ketbra{i}{i} + \sum_{i \neq j} c_{ij}\ketbra{i}{j} \right )  \right \} \\
            &= \mathrm{Tr} \left ( \sum_i \lambda_i'c_0 \ketbra{i}{i}\right ) +\mathrm{Tr} \left ( \sum_{i\neq j} \lambda_i'c_{ij} \ketbra{i}{j}\right ) \\
            &= 0,
        \end{align}
     
     where the final line comes from the fact that $\sum_i \lambda'_i = 0$ and $\mathrm{Tr}(\ketbra{i}{j})=0$ when $i\neq j.$
     \end{proof}

\subsection{HNIS Criterion for QZD}\label{Method:HNIS_with_QZD}

We provide a detailed proof of the first main result, Theorem~\ref{thm::QZD}. The Theorem describes a necessary and sufficient condition, so the argument has to show both directions of the statement.

\subsubsection{Sufficiency of the HNIS Criterion}
To prove sufficiency, suppose the HNIS criterion is satisfied so that $$H_S \notin \operatorname{span}\{ \mathbb{1}, S_i \} \coloneqq \mathcal{I}_{SE},$$ where $\mathcal{I}_{SE}$ is the interaction span defined in Section~\ref{sec::QZDmain}. With respect to the Hilbert-Schmidt inner product, we can decompose $H_S$ into its components parallel and orthogonal to $\mathcal{I}_{SE}$, such that $$H_S = H_{||} + H_{\perp}$$ where $H_{||} \in \mathcal{I}_{SE}$, $H_{\perp} \perp \mathcal{I}_{SE}$. Since $H_{\perp} \perp \mathcal{I}_{SE}$ and $\mathbb{1} \in \mathcal{I}_{SE},$ we must have $\operatorname{Tr}(\mathbb{1} H_{\perp}) = \operatorname{Tr}( H_{\perp}) = 0$, so $H_{\perp}$ is traceless. Any traceless, Hermitian matrix can be written as the difference between two positive semidefinite matrices with equal trace, which allows us to write $$H_{\perp} = \frac{1}{2} \operatorname{Tr}( \abs{ H_{\perp} }) (\rho_0 - \rho_1)$$ where we have normalized the aforementioned positive semidefinite matrices $\rho_0,\rho_1$ so that they are normalized density matrices. As we assume access to an ancilla system, there exist purifications $\ket{p_0}, \ket{p_1}$ for each of the density matrices $\rho_0,\rho_1$, where the purifications are state vectors on the combined Hilbert space of the system plus ancilla. Let $\Pi_p = \ketbra{p_0}{p_0}+\ketbra{p_1}{p_1}$ be a projector onto the two dimensional subspace spanned by $\ket{p_0}, \ket{p_1}$.  Recall that the purifications can always be written as \begin{align*}
    \ket{p_i} = \sum_{j}   \sqrt{\rho_i}\ket{j}_S\ket{\psi_{i,j}}_a ,
\end{align*} where $i=0,1$. If the dimension of the ancilla is sufficiently large, we can always ensure that  $\ket{p_0}, \ket{p_1}$ have orthogonal support on the ancilla system, in the sense that $\braket{\psi_{i,j}|\psi_{i',j'}} = \delta_{ii'} \delta_{jj'}$. One may then verify that \begin{align*} \bra{p_j} S_i \otimes \mathbb{1}_a \ket{p_{j'}} = 0,  \quad(\mathrm{when} \; j \neq j'),\end{align*} and \begin{align} \label{eq::projectorIdentity}\bra{p_j} S_i \otimes \mathbb{1}_a \ket{p_{j}} = \sum_{k} \bra{k}\sqrt{\rho} S_i \sqrt{\rho} \ket{k} =c_i.\end{align} Combining these two expressions gives us the following identity
\begin{align*}  \Pi_p S_i \otimes \mathbb{1}_a \Pi_p = c_i \Pi_p,\end{align*} where $\mathbb{1}_a$ acts on the ancilla and $c_i \coloneqq \bra{p_i} (S_i\otimes \mathbb{1}_a) \ket{p_i }$. This implies that $$\Pi_p \otimes \mathbb{1}_E (H_{SE}) \Pi_p \otimes  \mathbb{1}_E = \sum_i c_i \Pi_p \otimes E_i =  \Pi_p \otimes (\sum_i  c_i E_i).$$ The Hamiltonian after the projection is exactly the effective Hamiltonian of the QZD (see Eq.~\ref{eqn::QZD}). We see that if the initial system plus ancilla state is drawn from the subspace spanned by $\ket{p_0}, \ket{p_1},$, then  $\Pi_p$ effectively acts like the identity on this subspace, and the effective Hamiltonian takes the form $\mathbb{1}_{Sa} \otimes J_E$.  This may be interpreted as saying this subspace is effectively decoupled from the environment and thus noise free. 

For the next step, we consider the parallel component $H_\parallel$. By definition, it lies within the interaction span, so we can always write $H_\parallel =a \mathbb{1}+ \sum_k b_k S_k.$ Using the identity from Eq.~\ref{eq::projectorIdentity}, we have $$  \bra{p_j} H_\parallel \otimes \mathbb{1} \ket{p_j} = a+ \sum_k b_k c_k,$$ for $j=0,1$. This implies that \begin{align}  \label{eq::diffIdentity}\bra{p_0} H_\parallel \otimes \mathbb{1} \ket{p_0} -   \bra{p_1} H_\parallel \otimes \mathbb{1} \ket{p_1} = 0 .\end{align}

Finally, we can verify that \begin{align*}
    \bra{p_j} H_S \otimes \mathbb{1} \ket{p_j} = \operatorname{Tr}(\rho_j H_\parallel)+\operatorname{Tr}(\rho_j H_\perp),
\end{align*} where $j=0,1$. Taking the difference gives \begin{align}
    & \bra{p_0} H_S \otimes \mathbb{1} \ket{p_0} -  \bra{p_1} H_S \otimes \mathbb{1} \ket{p_1}\\ & \qquad =    \operatorname{Tr}\left [ (\rho_0 - \rho_1)H_\perp\right ] \\
    & \qquad = \frac{2}{\operatorname{Tr}(\abs{H_\perp})}\operatorname{Tr}\left [ H_\perp^2\right ] \\
    & \qquad \neq 0, \label{eq::notIdentity}
\end{align} where we have used the identities $H_S = H_{||} + H_{\perp}$ and $H_{\perp} = \frac{1}{2} \operatorname{Tr}( \abs{ H_{\perp} }) (\rho_0 - \rho_1)$ to obtain the final expression. Above, the parallel contribution disappears due to Eq.~\ref{eq::diffIdentity}.

Eq.~\ref{eq::notIdentity}  implies that  $\Pi_p H_S \otimes \mathbb{1}_a \Pi_p \not\propto \Pi_{p}$,  which can be interpreted as saying that the Hamiltonian $H_S \otimes \mathbb{1}$ is nontrivial, since it is not proportional to $\Pi_p$ which serves as the identity operator within the two dimensional  subspace.

Therefore, within the subspace spanned by $\ket{p_0}, \ket{p_1}$, the effective Hamiltonian generates nontrivial unitary dynamics on the system and ancilla while being decoupled from the environment. We have therefore found at least one projector $\Pi_p$ where the effective QZD Hamiltonian achieves HL, which  proves the sufficiency of the HNIS criterion.

\subsubsection{Necessity of the HNIS Criterion}
To prove the necessity of the HNIS criterion, assume HL is achievable via QZD. This means there exists projector $\Pi_p$  on the system and ancilla with corresponding effective Hamiltonian $H_Z = \Pi_p \otimes \mathbb{1}_E (\mathbb{1}_a \otimes  H_{\rm tot}) \Pi_p \otimes \mathbb{1}_E$ achieves HL. By the definition of the QZD protocols considered here, the effective QZD Hamiltonian must be decoupled from the environment, so we must satisfy $$\Pi_p  \otimes \mathbb{1}_E (\mathbb{1}\otimes H_{SE}) \Pi_p \otimes \mathbb{1}_E = \Pi_p \otimes H_E'$$ for some $H_E'$.  Using the decomposition $H_{SE} = \sum_i S_i \otimes E_i$, we obtain the expression \begin{align} \label{eq::necessaryIdentity} \sum_i \Pi_p \mathbb{1}_ \otimes S_i \Pi_p \otimes E_i =  \Pi_p \otimes H_E'.\end{align}

Recall that the HNIS criterion requires $\{ E_i \}$ to be a linearly independent set. For any linearly independent set of operators, there always exists a dual set of operators $\{F_i\}$ that is biorthogonal to ${E_i}$, in the sense that it satisfies $\operatorname{Tr}(F_i^\dagger E_j) = \delta_{ij}$. Using this property, we can write \begin{align*}&\operatorname{Tr}_E\left \{(\mathbb{1}_{aS}\otimes F_i^\dagger ) \Pi_p  \otimes \mathbb{1}_E (\mathbb{1}_a\otimes H_{SE}) \Pi_p \otimes \mathbb{1}_E \right\} \\
& \quad= \sum_j \Pi_p \mathbb{1}_a \otimes S_j \Pi_p\operatorname{Tr}(F_i^\dagger E_j) \\
&\quad = \Pi_p \operatorname{Tr}(F_i^\dagger H'_E),\end{align*} where the last equality comes from Eq.~\ref{eq::necessaryIdentity}. Since $\operatorname{Tr}(F_i^\dagger E_j) = \delta_{ij}$, we obtain the expression $$\Pi_p \mathbb{1}_a \otimes S_i \Pi_p = c_i \Pi_p,$$ where $c_i \coloneqq \operatorname{Tr}(F_i^\dagger H'_E).$

If $H_S \in \operatorname{span}\{\mathbb{1}, S_i\}$, then we can always write $H_S = a \mathbb{1}+\sum_k b_k S_k$. This implies \begin{align*}&\Pi_p \mathbb{1}_a \otimes H_S \Pi_p \\
& =a \Pi_p + \sum_k b_k \Pi_p \mathbb{1}_a \otimes S_k \Pi_p  \\
& = (a+\sum_k b_k c_k )\Pi_p \\
&  \propto \Pi_p\end{align*} Since the projected Hamiltonian term $\Pi_p H_{S}\otimes \mathbb{1}_a \Pi_p \propto  \Pi_p$ is the effective QZD Hamiltonian, it acts trivially on the subspace $\Pi_p$ since $\Pi_p$ is effectively the identity operator on this subspace. This implies that decoupling the system from environment using QZD is impossible if $H_S \in \operatorname{span}\{\mathbb{1}, S_i\}$. Therefore, we must conclude that decoupling and achieving HL is only possible when the HNIS criterion is met, i.e., $H_S \notin \operatorname{span}\{\mathbb{1}, S_i\}$, which  proves the necessity of the criterion.

\subsection{HNIS criterion for DD}\label{Method:HNIS_with_DD}
We now prove the second main result in Theorem~\ref{thm::DD} . By Proposition~\ref{proposition4}, the dynamical decoupling problem can be formulated in terms of the existence of a mixed-unitary channel that suppresses the system--environment interaction while preserving a nontrivial signal Hamiltonian. We show below that this is possible if and only if $H_S\notin\operatorname{span}\{\mathbb{1},S_i\}$.

\subsubsection{Necessity of the HNIS Criterion}
Assume that there exists a mixed unitary channel $\Phi$ that satisfies Proposition~\ref{proposition4} such that 
\begin{equation}\label{DD condition}
    \Phi\left(H_S\right) \not \propto \mathbb{1} \text { and }\langle i| \Phi\left[{ }_E\langle\psi| H_{S E}|\psi\rangle_E\right]|i\rangle=\text{constant}.
\end{equation} 

Choose an orthonormal basis for the operator space acting on the environment $\mathcal{H}_E$. Then one can always expand $H_{SE}$ as the linear sum $H_{SE}= \sum_{j}^{} S_j\otimes E_j$. Suppose $H_S \in \operatorname{span}\{ \mathbb{1}, S_i\}$, so we can write $H_S$ as a linear sum of operators $H_S=a I+\sum_k b_k S_k$ for some set of constants $a,b_k$.

Define the pinching channel $\Pi(\cdot)=\sum_i|i\rangle\langle i|(\cdot)|i\rangle\langle i|$ where $|i\rangle$ are the eigenstates of $\Phi\left(H_S\right)$. We may verify the following relation:
$$
 \Phi\left(H_S\right) =\Pi( \Phi\left(H_S\right)) =aI+\sum_k b_k \Pi(\Phi(S_k)).
$$
Due to the assumption in Eq.~\ref{DD condition}, we must have $\Pi \circ \Phi\left({ }_E\langle\psi| H_{S E}|\psi\rangle_E\right) = cI$ given any $|\psi\rangle_E$. This implies that interaction term has been decoupled and $(\Pi \circ \Phi) \otimes I_E\left(H_{S E}\right)=\sum_{k}^{} \Pi  (\Phi(S_k))\otimes E_k=c I_S \otimes J_E$. Since $\{E_k\}$ is an orthonormal set, $\operatorname{Tr}\left(E_j^{\dagger} E_k\right)=\delta_{j k}$. We then perform a partial trace over the environment, yielding the expression
$$
\begin{aligned}
  \Pi  (\Phi(S_j)) =&\operatorname{Tr}_E\left [(I_S\otimes E_j)\sum_{k}^{} \Pi  (\Phi(S_k))\otimes E_k  \right ] \\
 = &c I_S \operatorname{Tr}(E_jJ_E).
\end{aligned}
$$
This implies that $\Phi\left(H_S\right)=(a+\sum_k b_k c_k) I\propto I$, so the effective Hamiltonian acting on $\mathcal{H_S}$ is trivial and HL cannot be achieved when $H_S\in\operatorname{span}\{\mathbb{1},S_i\}$. This contradicts our assumption in Eq.~\ref{DD condition}, hence $H_S \notin \operatorname{span}\left\{I, S_i\right\}$ must be true whenever Proposition~\ref{proposition4} is true. This proves necessity.

\subsubsection{Sufficiency of the HNIS Criterion}
Define the linear map 
$$
\Phi_{K,\beta}(X) \coloneqq \frac{1}{d} \operatorname{Tr}(X) \mathbb{1}+\beta \operatorname{Tr}(K X) K
$$
where operator $K$ is traceless and Hermitian, and $d$ is the dimension of $\mathcal{H}_S$. Since $\operatorname{Tr}(K)=0$,
$ \operatorname{Tr}\left(\Phi_{K,\beta}(X)\right)=\frac{1}{d} \operatorname{Tr}(X) \operatorname{Tr}(\mathbb{1})+\beta \operatorname{Tr}(K X) \operatorname{Tr}(K)= \operatorname{Tr}(X).$ Therefore $\Phi$ is a trace preserving map.

We want to show that for certain choices of $\beta$ and $K$ $\Phi_{K,\beta}(X)$ is a mixed unitary channel.
Consider the spectral decomposition $K=\sum_{m=1}^d \lambda_m|m\rangle\langle m|$, where $\sum_m \lambda_m=0$ since $K$ is traceless. Observe that given any input operator $X$, the channel $\Phi_{K, \beta}(X) =\text{diag}(q_1,q_2,..,q_d)=\text{diag}(\mathbf{q} )$ is a diagonal matrix in the $\{ \ket{m} \}$ eigenbasis such that
$$
\begin{aligned}
q_m= & \langle m |\Phi_{K, \beta}(X) | m  \rangle  \\
 = &\frac{\operatorname{Tr}(X)}{d} +\beta \operatorname{Tr}(K X)\langle m | K| m  \rangle \\
 = &\frac{1}{d} \sum_n p_n+\beta\left(\sum_n \lambda_n p_n\right) \lambda_m\\
= &\sum_n^d\left(\frac{1}{d}+\beta \lambda_m \lambda_n\right) p_n
\end{aligned} 
$$
where $p_n=\left \langle n |X | n  \right \rangle $ are the diagonal elements of the operator $X$ in this basis. By defining the matrix
$
A_{m n}:=\frac{1}{d}+\beta \lambda_m \lambda_n,
$
this leads to the linear relation $q_m=\sum_n A_{m n} p_n$, then we have 
$$
\Phi_{K, \beta}(X) =\text{diag}(\sum_{n}^{} A_{1n}p_n ,..,\sum_{n}^{} A_{dn}p_n)=\text{diag}(A\mathbf{p} ). 
$$
The matrix $A$ has the following special properties:
\begin{equation}
\begin{gathered}
\sum_n A_{m n}=\sum_n\left(\frac{1}{d}+\beta \lambda_m \lambda_n\right)=1+\beta \lambda_m \sum_n \lambda_n=1 . \\
\sum_m A_{m n}=1+\beta \lambda_n \sum_m \lambda_m=1 .
\end{gathered}
\end{equation}
Therefore, the sums of its rows and columns are both equal to 1. If we can guarantee that all matrix elements $A_{m n}$ are nonnegative, then $A_{mn}$ is a doubly stochastic matrix. This can be guaranteed by choosing $\beta$ such that for any given $K$,  we have $|\beta| \leq \frac{1}{d\|K\|^2}$. This inequality comes from the fact that $\left|\lambda_m \lambda_n\right| \leq\|K\|^2$, where $\lVert (\cdot) \rVert$ is the operator norm.

Next, we exploit the fact that any doubly stochastic matrix $A$ can be written as a convex combination of permutation matrices due to the Birkhoff-von Neumann Theorem, i.e.,
\begin{equation}
A = \sum_i w_i P_i, \qquad 
w_i \ge 0, \qquad 
\sum_i w_i = 1,
\end{equation}
where each $P_i$ is a permutation matrix acting on a column vector $\mathbf{p}$. 

This allows us to write $\Phi_{K, \beta}(X) =\text{diag}(A\mathbf{p} )=\text{diag}(\sum_i w_i P_i\mathbf{p})$. Each permutation matrix $P_i$ acting on a column vector corresponds to a unitary permutation operation $U_i$ acting on a diagonal matrix, which allows us to write 
$$
U_i \operatorname{diag}(\mathbf{p})U_i^{\dagger}=\operatorname{diag}\left(P_i\mathbf{p}\right).
$$

Recall that $p_n$ are the diagonal elements of the operator $X$. It therefore suffices to apply a pinching channel
$
\Pi(X)=\sum_k|k\rangle\langle k| X|k\rangle\langle k|
$
to obtain the diagonal matrix $\operatorname{diag}\left(p_1, p_2, \ldots, p_d\right)$.
We then apply a mixed unitary channel corresponding to the convex combination of permutation operators,
$
\Gamma(\cdot)=\sum_i w_i U_i(\cdot) U_i^{\dagger}
$
which gives us the required channel
\begin{equation}\label{K}
\Phi_{K, \beta}(\cdot)=\Gamma \circ \Pi(\cdot)=\frac{\operatorname{Tr}(\cdot)}{d} \mathbb{1}+\beta \operatorname{Tr}(K(\cdot)) K,
\end{equation}
when $|\beta| \leq \frac{1}{d\|K\|^2}$.
Since both $\Gamma(\cdot)$ and the pinching channel $\Pi(\cdot)$ are mixed unitary channels, it follows that $\Phi_{K, \beta}(\cdot) = \Gamma \circ \Pi(\cdot)$ must also be a mixed unitary channel. This is true for any choice of traceless, Hermitian operator $K$.

The final step is to verify that for the correct choice of $K$, the mixed unitary channel $\Phi_{K, \beta}(\cdot)$ satisfies the conditions in Proposition~\ref{proposition4} when $H_S \notin \operatorname{span}\left\{\mathbb{1}, S_i\right\}$. We first decompose
$H_S$ into its parallel and orthogonal components such that $H_S^\parallel \in \operatorname{span}\{ \mathbb{1}, S_i\}$, $ H_S^\perp \perp   \operatorname{span}\{ \mathbb{1}, S_i\} $, and $H_S = H_S^\parallel + H_S^\perp$. 

Choose $K = H^\perp_S / \sqrt{\Tr [(H^\perp_S)^2 ] }$ such that  $$\Phi_{K, \beta}(\cdot)=\frac{\operatorname{Tr}(\cdot)}{d} \mathbb{1} +\frac{\beta}{\Tr [(H^\perp_S)^2 ]} \operatorname{Tr}\left[H_S^{\perp}(\cdot)\right] H_S^{\perp}.$$ Note that $K = H^\perp_S / \sqrt{\Tr [(H^\perp_S)^2 ] } =  G^\perp/\sqrt{\Tr [(G^\perp_S)^2 ]}$ is independent of the parameter $\omega$ when $H_S = \omega G$. This ensures that the quantum channel $\Phi_{K, \beta}$ does not encode any information about the parameter $\omega$ we are estimating. 

It may be verified that $\operatorname{Tr}(H_S^\perp H_S^\parallel) = 0$ and $\operatorname{Tr}(H_S^\perp S_i)= 0$, which implies $$\langle i | \Phi_{K,\beta}  (\prescript{}{E}{\braket{\psi|H_{SE}|\psi}_E}) |i\rangle   \\
      = \frac{\Tr(\sum_j S_j \braket{\psi |E_j |\psi})}{d}$$ 
 which is a constant for every eigenvector $\ket{i}$ and $$ \Phi_{K,\beta} (H_S) = \frac{\Tr(H_S)}{d} \mathbb{1} + \beta H_S^\perp \not\propto \mathbb{1}.$$ This directly satisfies the conditions in Proposition~\ref{proposition4} and proves sufficiency.

\subsubsection{Optimality of $\Phi_{K,\beta}$}

 We now discuss when the mixed unitary channel $\Phi_{K,\beta}$ is the optimal channel. First, we consider what an optimal channel should look like. From Theorem~\ref{thm::DD}, we know that if $H_S \in \operatorname{span}\{ \mathbb{1}, S_i\}$, then HL is not achievable. This is because the operator subspace $\operatorname{span}\{ \mathbb{1}, S_i\}$ generates the noise that acts on the system, and DD averages out the noise. If $H_S \in \operatorname{span}\{ \mathbb{1}, S_i\}$, then it implies that $H_S$ cannot be distinguished from the noise, and will also be averaged out by the DD. Since the encoding of the parameter $\omega$ is performed by $H_S$, HL cannot be achieved if DD succeeds. This implies that only operators are orthogonal to $\operatorname{span}\{ \mathbb{1}, S_i\}$ can survive the DD process.

 For a general $H_S$, we can always decompose it into its parallel and orthogonal components such that $H_S^\parallel \in \operatorname{span}\{ \mathbb{1}, S_i\}$, $ H_S^\perp \perp   \operatorname{span}\{ \mathbb{1}, S_i\} $, and $H_S = H_S^\parallel + H_S^\perp$. The optimal mixed unitary channel $\Phi$ must therefore perfectly preserve the perpendicular component, i.e., $ \Phi(H_S^\perp) = H_S^\perp$. The parallel component $H_S^\parallel$ will necessarily be averaged out since we demand that DD succeeds.

For the mixed unitary channel $\Phi_{K,\beta}$, one may check for optimality by verifying if $\Phi_{K,\beta}(H_S^\perp) = H_S^\perp$ is satisfied. By applying the map to $H_S$, we obtain the expression $$\Phi_{K,\beta}(H_S) = \frac{\operatorname{Tr}(H_S)}{d} \mathbb{1} +\beta H_S^{\perp},$$ which is the effective Hamiltonian acting on the system after DD. Since $H_S = \omega G$, this generates the effective unitary time evolution$$U_\textrm{eff}(t) = e^{-i(\omega\Tr (G)/d) t}e^{-i(\omega \beta G^\perp) t}.$$ Since $\omega \Tr (G)/d$ is a scalar, $e^{-i(\omega\Tr (G)/d) t}$ applies a global phase that has no measurable consequences, so the actual dynamics is governed by the $e^{-i(\omega \beta G^\perp) t}$ term. In order to preserve the dynamics generated by $H_S^\perp = \omega G^\perp,$ we will require $\beta =1$.

Since $|\beta| \leq \frac{1}{d\|K\|^2}$ where $K = H^\perp_S / \sqrt{\Tr [(H^\perp_S)^2 ] }$ this condition can be satisfied when all the eigenvalues of $H^\perp_S$ have equal magnitude $\abs{\lambda}$. In this case, we have $\Vert H_S^\perp \Vert = \abs{\lambda}$,  $\sqrt{\Tr [(H^\perp_S)^2 ] } = \sqrt{\sum_{i=1}^d |\lambda|^2} = \abs{\lambda} \sqrt{d}$ and  $\Vert K \Vert^2 = 1/d $. This allows us to choose $\beta = 1$. The requirement that the spectrum of $H^\perp_S$ has uniform magnitude can be expressed as the matrix condition $(H^\perp_S)^2 \propto  \mathbb{1}$. A consequence is that $\Phi_{K,\beta}$ is optimal for all single qubit systems:

\begin{corollary}
    Let $\mathcal{H_S}$ be the Hilbert space of a single qubit system. Then the mixed unitary channel $$\Phi_{K, \beta}(\cdot)=\frac{\operatorname{Tr}(\cdot)}{d} \mathbb{1}+\beta \operatorname{Tr}(K(\cdot)) K$$ where $\beta = 1$ and $K = H^\perp_S / \sqrt{\Tr [(H^\perp_S)^2 ] }$ is the optimal channel that achieves HL when the condition $H_S \not\in \operatorname{span}\{ \mathbb{1}, S_i\}$ in Theorem~\ref{thm::DD} is satisfied.
\end{corollary}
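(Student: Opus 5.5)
The proof reduces to checking the criterion stated just before the corollary: $\Phi_{K,\beta}$ can be taken with $\beta=1$ exactly when $(H^\perp_S)^2\propto\mathbb{1}$, i.e.\ when all eigenvalues of $H^\perp_S$ have equal magnitude. For a single qubit ($d=2$) this always holds whenever HNIS is satisfied. I would then combine this with the already-established sufficiency construction and the optimality criterion $\Phi(H_S^\perp)=H_S^\perp$.

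\textbf{Step 1: the spectrum of $H_S^\perp$.} Assume $H_S\notin\operatorname{span}\{\mathbb{1},S_i\}$. The orthogonal component $H_S^\perp$ is then nonzero. Because $\mathbb{1}$ lies in the interaction span, $H_S^\perp$ is also traceless, as shown in the sufficiency proof of Theorem~\ref{thm::QZD}. It is Hermitian, being the Hilbert--Schmidt projection of a Hermitian operator onto the orthogonal complement of a subspace. I would justify this by noting that $\operatorname{span}\{\mathbb{1},S_i\}$ can be taken closed under adjoint: for Hermitian $H_{SE}$ with orthonormal $\{E_i\}$ one can choose Hermitian $E_i$, which makes the $S_i$ Hermitian. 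A traceless Hermitian $2\times 2$ matrix has eigenvalues $\pm\lambda$ with $\lambda\neq 0$, so $(H_S^\perp)^2=\lambda^2\mathbb{1}$.

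\textbf{Step 2: the bound $\beta=1$ is admissible.} With $K=H_S^\perp/\sqrt{\operatorname{Tr}[(H_S^\perp)^2]}=H_S^\perp/(\abs{\lambda}\sqrt2)$ we get $\|K\|^2=1/2=1/d$. The admissibility bound $|\beta|\le 1/(d\|K\|^2)$ therefore equals $1$. By the Birkhoff--von Neumann construction in the sufficiency proof of Theorem~\ref{thm::DD}, $\Phi_{K,1}$ is a mixed unitary channel. As a sanity check, $A_{mn}=\tfrac12+\lambda_m\lambda_n/(2\lambda^2)$ is the permutation matrix of the identity. The channel reduces to the pinching channel in the eigenbasis of $K$, so it is manifestly mixed unitary.

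\textbf{Step 3: decoupling and optimality.} From the sufficiency proof, $\Phi_{K,1}$ satisfies the conditions of Proposition~\ref{proposition4}, since $\operatorname{Tr}(KS_i)=0$ and $\operatorname{Tr}(K)=0$. It therefore decouples the environment, giving HL. Finally,
\[
\Phi_{K,1}(H_S^\perp)=\frac{\operatorname{Tr}(H_S^\perp)}{2}\mathbb{1}+\operatorname{Tr}(KH_S^\perp)K=H_S^\perp ,
\]
since $\operatorname{Tr}(KH_S^\perp)K=H_S^\perp$ by normalization. This is exactly the optimality criterion.

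The main obstacle is only a technical point: ensuring that $H_S^\perp$ is Hermitian, so that its eigenvalues are real and the $\pm\lambda$ argument applies. The remaining steps are direct substitutions into results already proved.
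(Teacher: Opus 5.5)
Your proposal is correct and follows essentially the paper's route. Both reduce optimality to the admissibility of $\beta=1$, i.e.\ to $(H_S^\perp)^2\propto\mathbb{1}$, and then verify this for a qubit: the paper uses the Pauli expansion and anticommutation relations, while you use the $\pm\lambda$ spectrum of a nonzero traceless Hermitian $2\times 2$ matrix.

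Your extra care is worthwhile. The paper tacitly assumes $H_S^\perp$ is Hermitian (real Pauli coefficients), and your Hermitian choice of the $E_i$ is legitimate because $\operatorname{span}\{S_i\}$ does not depend on which linearly independent $\{E_i\}$ is used. Your observation that $\Phi_{K,1}$ is simply the dephasing channel in the eigenbasis of $K$ also makes its mixed-unitarity immediate.
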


\begin{proof}
    As discussed above, the channel is optimal when $\beta = 1$, which is possible whenever $(H^\perp_S )^2 \propto \mathbb{1} $. Therefore, we just need to show that $(H^\perp_S )^2\propto \mathbb{1}$ is always satisfied for single qubit systems.

    To do this, use the fact that the operator space is spanned by the Pauli matrices plus the identity, so  $H^\perp_S \in \operatorname{span} \{ \mathbb{1}, \sigma_x, \sigma_y, \sigma_z \}.$ This allows us to write $H^\perp_S = a_0 \mathbb{1} + a_1\sigma_x + a_2 \sigma_y + a_3 \sigma_z$. We also know that $a_0 = 0$ since by definition, $H^\perp_S \perp \operatorname{span}\{ \mathbb{1}, S_i\}, $ so $\Tr(H^\perp_S \mathbb{1} ) = \Tr(H^\perp_S)=0.$

    This gives us $(H^\perp_S)^2 =  \sum_{i,j=1}^3 a_i a_j \sigma_i \sigma_j,$ where $\sigma_1 \coloneqq \sigma_x$, $\sigma_2 \coloneqq \sigma_y$, and $\sigma_3 \coloneqq \sigma_z$. This can be simplified in the following way: \begin{align*}
        (H^\perp_S)^2 &= \sum_{i=1}^3 a_i^2 \sigma_i^2 + \frac{1}{2}\sum_{i \neq j} a_i a_j(\sigma_i \sigma_j + \sigma_j \sigma_i) \\
        &= \sum_{i=1}^3 a_i^2 \mathbb{1} \\
        &\propto \mathbb{1},
    \end{align*} where we used the identities $\sigma_i^2 = \mathbb{1}$ and $\{ \sigma_i ,\sigma_j\} = \sigma_i \sigma_j + \sigma_j \sigma_i = 2\delta_{ij} $. This proves the result. 
\end{proof}

For $n$ qubit systems, the channel is only optimal for special cases, which is stated as the following Corollary:

\begin{corollary}
    Let $\mathcal{H_S}$ be the Hilbert space of an $n$ qubit system. Then the mixed unitary channel $$\Phi_{K, \beta}(\cdot)=\frac{\operatorname{Tr}(\cdot)}{d} \mathbb{1}+\beta \operatorname{Tr}(K(\cdot)) K$$ where $\beta = 1$ and $K = H^\perp_S / \sqrt{\Tr [(H^\perp_S)^2 ] }$ is the optimal channel that achieves HL when the condition $H_S \not\in \operatorname{span}\{ \mathbb{1}, S_i\}$ in Theorem~\ref{thm::DD} is satisfied and $H_S^\perp = \sum_i a_i P_i $ where $\{ P_i \}$ is a set of mutually anti-commuting $n$ qubit Pauli strings.  
\end{corollary}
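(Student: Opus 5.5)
Following the proof of the single-qubit corollary, the argument reduces to showing that $(H_S^\perp)^2 \propto \mathbb{1}$ whenever $H_S^\perp = \sum_i a_i P_i$ with $\{P_i\}$ mutually anticommuting $n$-qubit Pauli strings. As shown just before the previous corollary, this spectral condition forces every eigenvalue of $H_S^\perp$ to have the same magnitude $|\lambda|$. Then $\|K\|^2 = 1/d$ for $K = H_S^\perp/\sqrt{\Tr[(H_S^\perp)^2]}$, so the admissibility bound $|\beta| \le 1/(d\|K\|^2)$ allows $\beta = 1$. With $\beta = 1$ the channel gives $\Phi_{K,1}(H_S) = \frac{\Tr(H_S)}{d}\mathbb{1} + H_S^\perp$. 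Since $H_S^\perp$ is traceless, $\Phi_{K,1}(H_S^\perp) = H_S^\perp$, which is optimality in the sense defined above. The DD conditions of Proposition~\ref{proposition4} are already verified for every admissible $\beta$ in the sufficiency proof of Theorem~\ref{thm::DD}, so they need not be rechecked.

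The core computation is to square $H_S^\perp$:
\begin{equation*}
(H_S^\perp)^2 = \sum_i a_i^2 P_i^2 + \tfrac{1}{2}\sum_{i\neq j} a_i a_j \{P_i, P_j\}.
\end{equation*}
Every Pauli string is Hermitian and squares to $\mathbb{1}$, and the cross terms vanish by the assumed anticommutation. Hence $(H_S^\perp)^2 = \left(\sum_i a_i^2\right)\mathbb{1}$.

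Two consistency checks are needed. First, the $a_i$ must be real; this follows because $H_S^\perp$ is Hermitian and Pauli strings form an orthogonal Hermitian basis. Second, the identity string cannot be among the $P_i$. Indeed, $\mathbb{1}$ commutes with everything, so it could only appear if the set were the single element $\{\mathbb{1}\}$, and that case is excluded because $\Tr(H_S^\perp) = 0$. The fact that $H_S^\perp$ is traceless and orthogonal to $\operatorname{span}\{\mathbb{1}, S_i\}$ comes directly from its definition, and the HNIS condition guarantees $H_S^\perp \neq 0$, so $K$ is well defined.

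No step is a serious obstacle, since this is a direct generalization of the Pauli algebra used in the qubit case. The one point needing care is the passage from $(H_S^\perp)^2 = c\,\mathbb{1}$ to $\|K\|^2 = 1/d$. Every eigenvalue satisfies $\lambda^2 = c$, so $\|H_S^\perp\|^2 = c$ and $\Tr[(H_S^\perp)^2] = dc$, giving $\|K\|^2 = c/(dc) = 1/d$.
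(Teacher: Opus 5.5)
Your proposal is correct and follows the paper's proof. Both reduce optimality to showing $(H_S^\perp)^2\propto\mathbb{1}$, then expand the square using $P_i^2=\mathbb{1}$ and the vanishing anticommutators. Your extra checks (real $a_i$, $\mathbb{1}\notin\{P_i\}$, the explicit computation $\|K\|^2=1/d$ that makes $\beta=1$ admissible, and $\Phi_{K,1}(H_S^\perp)=H_S^\perp$) make explicit what the paper leaves implicit.
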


\begin{proof}
    Recall that $n$ qubit Pauli strings are $n$ fold tensor products of the Pauli matrices plus the identity, i.e. $P_i = \bigotimes^n_{j=1} \sigma_{ij} $, where $\sigma_{ij} \in  \{ \mathbb{1}, \sigma_x, \sigma_y,\sigma_z \} $. Like the single qubit case, these matrices span the operator space of $n$ qubits, so one may always express $H_S^\perp$ as a linear combination of Pauli strings. However, unlike the single qubit case, $n$ qubit Pauli strings can either commute or anti-commute. 
    
    Similar to the previous corollary, we just need to show that $(H_S^\perp)^2 \propto \mathbb{1}$ when $\{ P_i \}$ forms a mutually anti-commuting set. We may verify that  \begin{align*}
        (H^\perp_S)^2 &= \sum_{i=1} a_i^2 P_i^2 + \frac{1}{2}\sum_{i \neq j} a_i a_j(P_i P_j + P_j P_i) \\
        &= \sum_{i} a_i^2 \mathbb{1} \\
        &\propto \mathbb{1},
    \end{align*} where we used the assumption that $\{ P_i, P_j \} = 2\delta_{ij}$.
\end{proof}

\subsection{Example where HNIS is satisfied but HNLS fails} \label{App:Counterexample}

Here, we will consider in greater detail an example of a metrological scenario in the Markovian regime where the HL is not achievable using HNLS based QEC, but is achievable using the HNIS based QZD or DD. This example also proves Proposition 2 in the main text.

Proposition~\ref{prop::HNLS} may be shown by counterexample. Consider the following Hamiltonian:
\begin{align} \label{eq::h_tot}
    H_{\text{tot}} &= H_S \otimes \openone_E + \openone_S \otimes H_E + H_{SE} \\
    &= \omega \begin{pmatrix} 0 & 0 \\ 0 & 1 \end{pmatrix}_S \otimes \openone_E + \openone_S \otimes H_E + \sigma_x \otimes E
\end{align}

We will apply the Born-Markov and the secular approximations to obtain the corresponding Lindblad jump operators $\{ L_i \}$ in the Lindblad master equation. $\{ L_i \}$ describes noise in the Markovian regime. 

In the interaction picture, we have
\begin{align*}
    H_{SE}(t) &= e^{iH_S t} \otimes e^{iH_E t} (\sigma_x \otimes E) e^{-iH_S t} \otimes e^{-iH_E t} \\
    &= \sigma_x(t) \otimes E(t)
\end{align*}

where $\sigma_x(t) \coloneqq e^{iH_S t} \sigma_x e^{-iH_S t}$ and $E(t) \coloneqq e^{iH_E t} E e^{-iH_E t}$.

Since $e^{iH_S t} = \begin{pmatrix} 1 & 0 \\ 0 & e^{i\omega t} \end{pmatrix}$, we can simplify $\sigma_x(t)$ to obtain
\begin{align*}
    \sigma_x(t) &= \begin{pmatrix} 1 & 0 \\ 0 & e^{i\omega t} \end{pmatrix} \begin{pmatrix} 0 & 1 \\ 1 & 0 \end{pmatrix} \begin{pmatrix} 1 & 0 \\ 0 & e^{-i\omega t} \end{pmatrix} \\
    &= \begin{pmatrix} 0 & e^{-i\omega t} \\ e^{i\omega t} & 0 \end{pmatrix} \\
    &= e^{-i\omega t} \sigma_- + e^{i\omega t} \sigma_+.
\end{align*}

We see that $\sigma_x(t)$ is a sum of 2 operators $\sigma_- \coloneqq \ketbra{0}{1}$ and $\sigma_+ \coloneqq \ketbra{1}{0}$ with associated frequencies  $-\omega$ and $+\omega$ respectively. The usual derivation of the Lindblad master equation~\cite{manzano2020short} states that if one expands the interaction term in terms of its frequency components $\nu$:
\[
H_{SE}(t) = \sum_{j,\nu} e^{i\nu t} S_j(\nu) \otimes E_j(t),
\]
then after applying the Born-Markov approximation and the secular approximation the remaining terms describing the time evolution are given by
\begin{align*}
    \frac{d\rho}{dt} &= -i[H_S + H_{LS}, \rho(t)] \, + \\
    &\quad \sum_{j,k,\nu} \gamma_{jk}(\nu) \left( S_j(\nu)\rho(t)S_k^\dagger(\nu) - \frac{1}{2}\{S_k^\dagger(\nu)S_j(\nu), \rho(t)\} \right),
\end{align*}
where the indices $j,k$ run over the number of operators corresponding to each frequency $\nu$.

In this example, there is only one operator associated with each frequency, so $j,k=1$. By substitution into the above expression, we obtain
\begin{align*}
\frac{d\rho}{dt} & = -i[H_S + H_{LS}, \rho(t)] \, + \\
& \quad \gamma(+\omega)\left(\sigma_+ \rho(t) \sigma_+^\dagger - \frac{1}{2}\{\sigma_+^\dagger \sigma_+, \rho(t)\}\right) + \\
&\quad \gamma(-\omega)\left(\sigma_- \rho(t) \sigma_-^\dagger - \frac{1}{2}\{\sigma_-^\dagger \sigma_-, \rho(t)\}\right).
\end{align*}

The above describes how the probe evolves when the interaction with the environment is Markovian, so errors are necessarily generated by the Lindblad jump operators $\{ L_i \}$. Consider the possibility of achieving HL by using QEC. According to the HNLS condition~\cite{zhou2018achieving}, HL is achievable using QEC only when the signal Hamiltonian encoding parameter $\omega$ has a component orthogonal to $\text{span}\{\openone, \sigma_+, \sigma_-, \sigma_+^\dagger, \sigma_-^\dagger, \sigma_+^\dagger \sigma_+, \sigma_-^\dagger \sigma_-, \sigma_+^\dagger \sigma_-, \sigma_-^\dagger \sigma_+\}$. However, one may check that the following terms in this set:
\begin{align*}
\sigma_+ &= \begin{pmatrix} 0 & 0 \\ 1 & 0 \end{pmatrix}, \,\qquad \sigma_- = \begin{pmatrix} 0 & 1 \\ 0 & 0 \end{pmatrix}, \\
\sigma_+^\dagger \sigma_+ &= \begin{pmatrix} 1 & 0 \\ 0 & 0 \end{pmatrix}, \quad \sigma_-^\dagger \sigma_- = \begin{pmatrix} 0 & 0 \\ 0 & 1 \end{pmatrix},
\end{align*}
spans the entire space of linear operators acting on subsystem $S$, so it is not possible to find any signal Hamiltonian that can satisfy the HNLS condition. Since the HNLS condition is both necessary and sufficient for QEC to achieve HL in the Markovian regime, HL cannot be achieved using QEC in this example.

\begin{figure*}[t]
    \centering
    \begin{minipage}[t]{0.45\textwidth}
        \includegraphics[width=\textwidth]{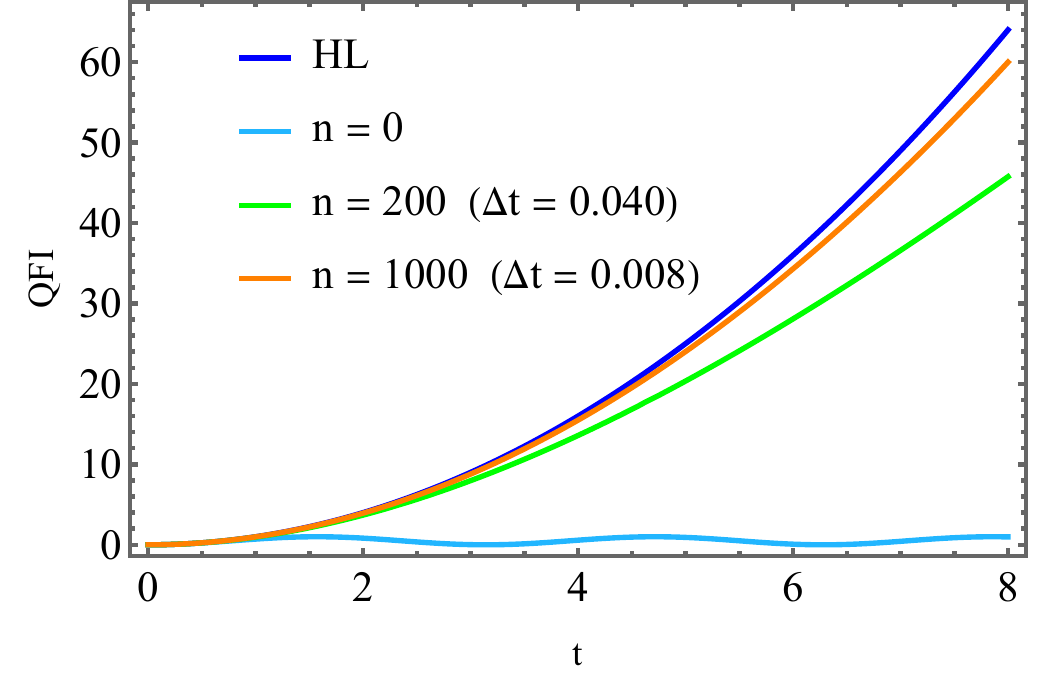}
        \caption{QFI for the counterexample using QZD, at different projection intervals $\Delta t$ and number of projections $n$. As the projection frequency increases, the QFI-time  curve approaches the Heisenberg limit.}
        \label{fig:Appendixfirst}
    \end{minipage}
    \hfill
    \begin{minipage}[t]{0.45\textwidth}
        \includegraphics[width=\textwidth]{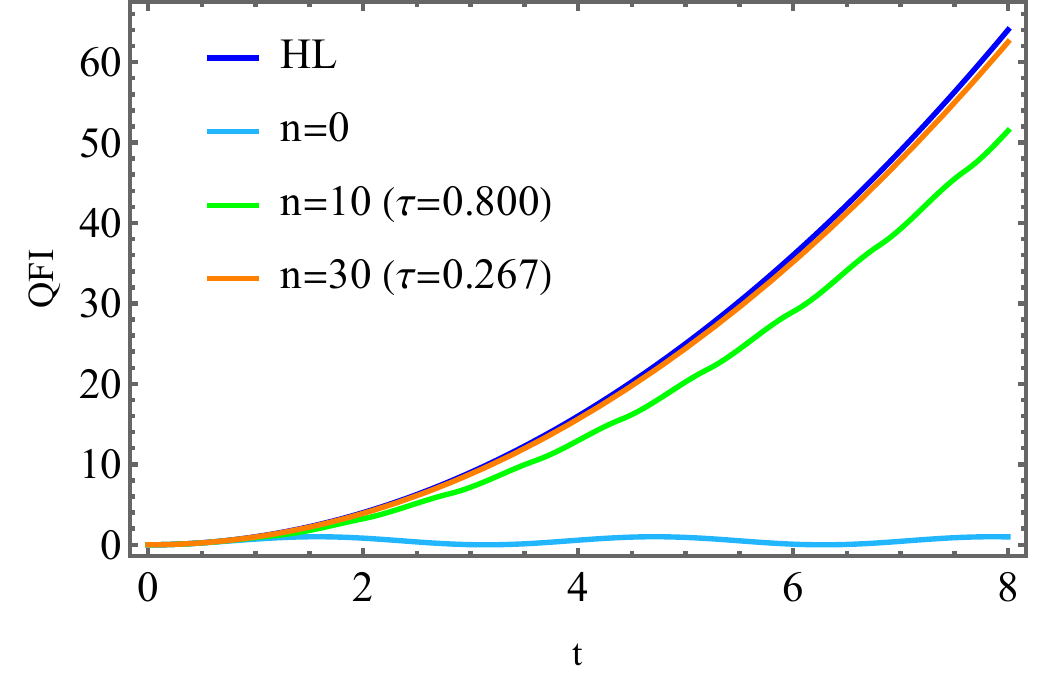}
        \caption{QFI for the counterexample using DD, at different periods $\tau$ and number of cycles $n$. As the number of cycles increases, the QFI-time  curve approaches the Heisenberg limit. }
        \label{fig:Appendixsecond}
    \end{minipage}
\end{figure*}

In the main text, Theorems 1 and 2 provide necessary and sufficient conditions for HL achievability using QZD and DD. In this case, we will use the condition for sufficiency, which states that if $H_S \notin \text{span}(\openone, \sigma_x),$ then HL is achievable using either QZD or DD. Since $H_S = \omega\begin{pmatrix} 0 & 0 \\ 0 & 1 \end{pmatrix} = \frac{\omega}{2}(\openone -\sigma_z)$, it is clear that the condition $H_S \notin \text{span}(\openone, \sigma_x)$ is satisfied so HL is achievable. This implies that HL can be achieved using QZD/DD whereas it is not possible to achieve this using the QEC protocols covered by the HNLS condition. 

To verify this, the following is the explicit QZD protocol that achieves this:

First, append a noiseless ancilla to the system. This results in the extended Hamiltonian:
%\begin{align*}
%H_{\text{tot}}' &= \openone_a \otimes H_{\text{tot}} \\
%&= \omega\openone_a \otimes \begin{pmatrix} 0 & 0 \\ 0 & 1 \end{pmatrix}_S \otimes \openone_E + \openone_a \otimes \openone_S \otimes H_E %+ \openone_a \otimes \sigma_x \otimes E
%\end{align*}

\begin{align*}
H_{\mathrm{tot}}'
&= \openone_a \otimes H_{\mathrm{tot}} \\
&= \omega \openone_a \otimes
\begin{pmatrix}
0 & 0 \\
0 & 1
\end{pmatrix}_S
\otimes \openone_E \\
&\quad + \openone_a \otimes \openone_S \otimes H_E
+ \openone_a \otimes \sigma_x \otimes E .
\end{align*}

Consider the projection of $H_{\text{tot}}'$ onto the $\{|00\rangle, |11\rangle\}$ subspace using the projector $\Pi_p = |00\rangle\langle 00| + |11\rangle\langle 11|$, which acts on the system plus ancilla. The resulting effective Hamiltonian on the subspace is given by
\[
\Pi_p \otimes \openone_E H_{\text{tot}}' \Pi_p \otimes \openone_E = \omega|11\rangle\langle 11| \otimes \openone_E + \Pi_p \otimes H_E
\]
where the interaction term disappears since
\[
\openone_a \otimes \sigma_x = (|00\rangle\langle 01| + |01\rangle\langle 00| + |10\rangle\langle 11| + |11\rangle\langle 10|).
\]

We see that all the operator components acting on the system plus ancilla in the interaction term contain either $|01\rangle$ or $|10\rangle$ vectors, which lie outside of $\text{span}\{|00\rangle, |11\rangle\}$. This implies $\Pi_p (\openone_a \otimes \sigma_x) \Pi_p \otimes E = 0$, thus decoupling the system from the environment. This shows that in the effective Zeno dynamics, the system-environment interactions have been projected away.

Next, prepare the probe state $|\psi\rangle = \frac{1}{\sqrt{2}}(|00\rangle + |11\rangle)$ on the system plus ancilla subsystem. Since this belongs to the $\{|00\rangle, |11\rangle\}$ subspace its effective dynamics is a noiseless unitary evolution described by the Hamiltonian $\omega|11\rangle\langle 11| = \frac{\omega}{2} \left [ \Pi_p - (|00\rangle\langle 00| - |11\rangle\langle 11|) \right ]$.

We see that $\Pi_p$ effectively acts as the identity operator within this subspace, while $(|00\rangle\langle 00| - |11\rangle\langle 11|)$ effectively acts as the Pauli Z operator within this subspace. So the effective dynamics is a noiseless rotation about the Z axis when pictured on the Bloch sphere. This describes the same dynamics as the original signal Hamiltonian $H_S =\frac{\omega}{2}(\openone -\sigma_z)$, which also describes a rotation about the $Z$ axis. In Fig.~\ref{fig:Appendixfirst} we perform a numerical simulation that provides supporting evidence that QZD can retrieve HL in the Zeno limit.

Since the QZD condition is the same as the DD condition (see Theorems 1 and 2 of the manuscript), both QZD and DD can achieve HL. Indeed, we see that in Eq.~\ref{eq::h_tot}, the signal is a rotation about the $Z$ axis while the noise is a random rotation about the $X$ axis. Pulse sequences that are able to decouple this type of error using DD are known~\cite{khodjasteh2007performance}. In Fig.~\ref{fig:Appendixsecond} we perform a numerical simulation that provides supporting evidence that DD can retrieve HL.

The fact that QZD/DD is able to achieve HL while QEC cannot in this example can be attributed to the fact that in QEC the errors are allowed to accumulate over a short period. Since the error is allowed to accumulate, the end result is that when the uncontrolled system--environment dynamics admits a Markovian Lindblad description obtained under the Born--Markov and secular approximations, the errors are characterized by the corresponding Lindblad jump operators $\{L_i\}$. QEC then asks if the error can be corrected after they occur, and in this case the errors cannot be corrected without also correcting the signal Hamiltonian.

In contrast, QZD/DD prevents the accumulation of errors before they occur by decoupling the system from the environment.  In the QZD/DD case, there is effectively no system-environment interaction, the Born–Markov and secular approximations, if applied after introducing the control, would generally yield a different effective generator. This enables the situation described in Proposition~\ref{prop::HNLS} to occur. This example shows that it is sometimes beneficial to prevent errors from occurring using the QZD/DD approach rather than construct a code that corrects errors after they occur.

\begin{acknowledgments}
\noindent\textit{K.C.T. acknowledges support by the National Natural Science Fund of China (Grant No. G0512250610191). } 
\end{acknowledgments}

\bibliography{main.bib}% Produces the bibliography via BibTeX.

\end{document}